\newtheorem{thm}{Theorem}[section]
\numberwithin{equation}{section}
\newtheorem{prop}[thm]{Proposition}
\newtheorem{defn}[thm]{Definition}
\newtheorem{rem}[thm]{Remark}
\newtheorem{lem}[thm]{Lemma}
\title{Ground states for the Hartree energy functional in the critical case}
\author{Tommaso Pistillo}
\affil{Université de Lorraine, CNRS, IECL, F-57000 Metz, France\\Politecnico di Milano, 20133 Milano, Italy}
\date{16 December 2025}
\begin{document}
\maketitle
\textbf{ABSTRACT}: We consider the problem of finding a minimizer $u$ in $ H^1(\mathbb{R}^3)$ for the Hartree energy functional with convolution potential $w$ in $L^\infty(\mathbb{R}^3)+L^{3/2,\infty}(\mathbb{R}^3)$ with $L^\infty$ part vanishing at infinity. This class includes sums of potentials of the kind $-\frac{1}{|x|^\alpha}$, $0<\alpha\le2$, together with the case $w$ in $L^{3/2}(\mathbb{R}^3)$. We prove the existence of such groundstates for a wide range of $L^2$ masses. We also establish basic properties of the groundstates, i.e.~positivity and regularity. Lastly, we exploit the estimates we derived for the stationary problem to prove global well-posedness of the associated evolution problem and orbital stability of the set of ground states.

\section{Introduction}

We consider the Hartree energy functional

\begin{equation}\label{Hartree Energy}
    \mathcal{E}(u)=\frac{1}{2}\int_{\mathbb{R}^3}|\nabla u(x)|^2\,\mathrm{d}x+\frac{1}{4}\iint_{\mathbb{R}^3\times\mathbb{R}^3}|u(x)|^2w(x-y)|u(y)|^2\,\mathrm{d}x\mathrm{d}y,\;u\in H^1(\mathbb{R}^3)
\end{equation}
where $w\not\equiv0$ is a real-valued even function. Minimizers of \eqref{Hartree Energy} are stationary solutions of the time-dependent Hartree equation
\begin{equation}\label{Hartree equation}
    i\partial_t u=-\Delta_x u+\left(w\ast |u|^2\right)u,
\end{equation}
which arises as the mean-field limit for a system of non-relativistic bosons with long-range two-body interaction $w$ which is mostly attractive \cite{Fisica, Lewin Nam, Lieb Thomas}.

Standing wave solutions of \eqref{Hartree equation} also solve
\begin{equation*}
    -\Delta u+\mu u=-\left(w\ast|u|^2\right)u\;\text{ in }\mathbb{R}^3.
\end{equation*}
This generalization of Choquard equation arises from Fröhlich and Pekar's model of the polaron \cite{Frolich 1, Frolich 2, Pekar}, in which electrons and phonons interact in a lattice.
 
The existence of ground states for the Hartree energy has been extensively discussed in the literature. In \cite{Lions} P. L. Lions proved it for the Choquard-Pekar energy functional (i.e.~\eqref{Hartree Energy} with $w(x)=-\frac{1}{|x|}$) for any fixed $L^2$ mass using the concentration-compactness method there developed, instead of the earliest decreasing rearrangement method proposed by Lieb in \cite{Lieb}; more recently, M. Moroz and J. Van Schaftingen \cite{regularity, Review Choquard} extended this result to the optimal choice of parameters $\alpha,p$ for the nonlinear Choquard equation
\begin{equation}\label{nonlinear choq}
    \begin{cases}
        -\Delta u+u=\left(\frac{C_\alpha}{|x|^{d-\alpha}}\ast|u|^p\right)|u|^{p-2}u\;&\text{ in }\mathbb{R}^d\\
        u(x)\rightarrow0\;&\text{ as }|x|\rightarrow\infty,
    \end{cases}
\end{equation}
together with properties of the solution, like smoothness and positivity. Furthermore, N. Ikoma and K. Myśliwy \cite{Ikoma} proved a necessary and sufficient condition on the mass of the ground states in order for them to exist, for a potential $w\in L^{3/2}(\mathbb{R}^3)$. Lastly, one can take the potential $w$ to be nonattractive provided the system is subject to an external potential $V$ which is trapping in some sense (either a local or a global trap) \cite{Fisica, Lions}, or which introduces a kind of spectral gap \cite{Br Fa Pa 1}. In particular, for the Coulomb potential $w(x)=\frac{1}{|x|}$ many results exist on the classes of $V$ which guarantee a ground state \cite{Benguria Brezis, Gustafson Sather, Lebris Lions, Lions 2}.

Although there are plenty of discussions on existence of ground states for Choquard-type equations, there are very few results on uniqueness, especially if no external potential is present; the main results we found of interest were \cite{Lieb}, where Lieb proved uniqueness of the minimizer up to phases and translations for the Coulomb potential $w(x)=-\frac{1}{|x|}$, and \cite{Lenzmann}, where Lenzmann proved uniqueness in $H^{1/2}$ of the ground state to the pseudo-relativistic Hartree equation.

Regarding solutions to the focusing Hartree equation \eqref{Hartree equation}, local existence is well known for the time dependent Choquard equation arising from \eqref{nonlinear choq} (see, for instance, \cite{Ginibre Velo}), while global existence is more delicate and depends on the choice of parameters $\alpha,p$ \cite{Avenia Squassina, Bonanno Avenia, Genev Venkov, Ginibre Velo}. The study of global well-posedness of the Cauchy problem arising from \eqref{Hartree equation}, also comprising the continuous dependence w.r.t.~the initial datum, dates back to \cite{Ginibre Velo 2}.
\subsection{Main results}
In this paper, we work in dimension $3$ for simplicity of exposition but our results can be easily extended to any dimension $d\ge3$.

Our main focus is the study of the existence of minimizers for \eqref{Hartree Energy} with convolution potential $w$ in $L^\infty(\mathbb{R}^3)+L^{3/2,\infty}(\mathbb{R}^3)$ over $\mathcal{S}_\lambda=\{u\in H^1(\mathbb{R}^3)\;:\;\|u\|^2_{L^2}=\lambda\}$;  namely, we are interested in solving
\begin{equation}\label{definition minimizer}
    I(\lambda)=\inf_{u\in\mathcal{S}_\lambda}\mathcal{E}(u).
\end{equation}
Compared to previously cited results, our main contribution consists in considering a large class, probably almost optimal, of sums of potentials in $L^p$ and weak $L^p$ spaces, and a large interval of $\lambda$'s, depending on $w$.

We assume the $L^\infty$ part of $w$ to vanish at infinity; this is a crucial hypothesis, as one can easily prove that \eqref{Hartree Energy} with $w\equiv-1$ has no ground state in $\mathcal{S}_\lambda$ for any $\lambda>0$. Moreover, we assume that the singular part of $w$ is in $L^{3/2,\infty}$, the weak $L^{3/2}$ space endowed with the quasi-norm
\begin{equation*}
    \|f\|_{L^{3/2,\infty}}=\sup_{t>0}\left(t|\{|f|>t\}|^{2/3}\right),
\end{equation*}
where we indicated with $|X|$ the Lebesgue measure of a measurable set $X\subset\mathbb{R}^3$.

We also introduce the following notation regarding Sobolev spaces:
\begin{equation*}
    \begin{split}
        W^{2,r}(\mathbb{R}^3)&=\{u\in L^r(\mathbb{R}^3)\;:\;\Delta u\in L^r(\mathbb{R}^3)\}\\
        \Dot{W}^{m,r}(\mathbb{R}^3)&=\{u\in\mathcal{D}'(\mathbb{R}^3)\;:\;D^mu\in L^r(\mathbb{R}^3)\}\\
    \end{split}
\end{equation*}
where $D u$ is the distributional derivative of $u$.

We prove the following existence result:
\begin{thm}\label{main theorem}
    Let $0\not\equiv w=w_1+w_2\in L^\infty(\mathbb{R}^3)+L^{3/2,\infty}(\mathbb{R}^3)$ be an even function such that there exists $u\in H^1(\mathbb{R}^3)$ for which $\int(w\ast|u|^2)|u|^2<0$ and such that $w_1(x)\xrightarrow{|x|\rightarrow\infty}0$. Define \begin{equation}\label{def c2}
        C_2=\inf\{\|w_2\|_{L^{3/2,\infty}}\,:\,w=w_1+w_2\in L^\infty(\mathbb{R}^3)+L^{3/2,\infty}(\mathbb{R}^3)\}
    \end{equation}
    and
    \begin{equation}\label{def k}
        K=\sup_{\substack{0\ne u\in H^1\\0\ne \tilde{w}\in L^{3/2,\infty}}}\frac{\left|\int(\tilde{w}\ast|u|^2)|u|^2\right|}{\|w\|_{L^{3/2,\infty}}\|u\|^2_{L^2}\|u\|^2_{\Dot{H^1}}}<\infty.
    \end{equation}
    Then, set 
    \begin{equation}
        \lambda_*=\inf\{\lambda>0\,:\,I(\lambda)<0\}
    \end{equation}
    and
    \begin{equation}
        \lambda^*=\frac{1}{C_2K}.
    \end{equation}
    If $\lambda_*<\lambda<\lambda^*$, then problem \eqref{definition minimizer} has a solution $u_*\in\mathcal{S}_\lambda$. Moreover, every minimizer of \eqref{definition minimizer} is positive (up to a constant phase), smooth and in $W^{2,r}(\mathbb{R}^3)$ for every $r\ge2$.
    
    Furthermore, if $w(x)=W(|x|)$ with $W\,:\,(0,\infty)\rightarrow\mathbb{R}$ non-decreasing, then the minimizer can be chosen radial (about some point) and non-increasing.

    Lastly, if $0<\lambda<\lambda_*$, then problem \eqref{definition minimizer} has no solution.
\end{thm}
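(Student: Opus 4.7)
My plan is to run Lions's concentration-compactness machinery, supplemented by rearrangement and an elliptic bootstrap for the qualitative statements. Coercivity is the starting point: the definition of $K$ combined with a decomposition $w=w_1+w_2$ whose $L^{3/2,\infty}$-part is taken close to $C_2$ yields
\begin{equation*}
\left|\int(w*|u|^2)|u|^2\right|\leq\|w_1\|_\infty\|u\|_{L^2}^4+K\|w_2\|_{L^{3/2,\infty}}\|u\|_{L^2}^2\|\nabla u\|_{L^2}^2.
\end{equation*}
For $u\in\mathcal{S}_\lambda$ with $\lambda<\lambda^*=(KC_2)^{-1}$, the coefficient of $\|\nabla u\|_{L^2}^2$ in $\mathcal{E}(u)$ is positive with a uniform margin, so $\mathcal{E}$ is bounded below on $\mathcal{S}_\lambda$ and every minimizing sequence is bounded in $H^1$.

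For existence I would first prove strict subadditivity $I(\lambda)<I(\alpha)+I(\lambda-\alpha)$ for $0<\alpha<\lambda$ via the mass rescaling $u\mapsto\sqrt{t/\lambda}\,u$: since $\lambda>\lambda_*$ forces $I(\lambda)<0$, every near-minimizer on $\mathcal{S}_\lambda$ has interaction bounded strictly away from $0$ below, from which a direct computation yields that $t\mapsto I(t)/t$ is strictly decreasing on $[\lambda,\infty)$. Subadditivity then follows by splitting into the regimes $(\lambda_*,\infty)$ vs.~$(0,\lambda_*]$, with the latter handled by using $I(\alpha)\geq 0$ and $I(\lambda)<0$. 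Applied to $|u_n|^2$, Lions's trichotomy then rules out vanishing (which would force $\|u_n\|_{L^p}\to 0$ for $2<p<6$ and, through the coercivity estimate above, drive the interaction to $0$, contradicting $I(\lambda)<0$) and rules out dichotomy (by strict subadditivity); the remaining compactness-up-to-translation together with the weak lower semicontinuity of $\mathcal{E}$ produces a minimizer $u_*\in\mathcal{S}_\lambda$.

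For the qualitative properties, $|u_*|$ is also a minimizer because the interaction depends only on $|u_*|^2$ and $\|\nabla|u_*|\|_{L^2}\leq\|\nabla u_*\|_{L^2}$, and it satisfies the Euler-Lagrange equation $-\Delta|u_*|+\mu|u_*|=-(w*|u_*|^2)|u_*|$ for some $\mu\in\mathbb{R}$. Regularity follows by bootstrap: $w*|u|^2\in L^\infty+L^p$ for various $p$ via Young's inequality for weak $L^p$ spaces, so iterating elliptic regularity for $-\Delta u+\mu u=f$ upgrades $u$ into $L^r$ and eventually $W^{2,r}$ for every $r\geq 2$, whence smoothness; the strong maximum principle then gives $|u_*|>0$ and hence $u_*=e^{i\theta}|u_*|$. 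When $w=W(|x|)$ with $W$ non-decreasing, $-w$ coincides with its own symmetric decreasing rearrangement, so Pólya-Szegő together with Riesz's rearrangement inequality give $\mathcal{E}(u^*)\leq\mathcal{E}(u)$, and the symmetric decreasing rearrangement of a minimizer is again a minimizer.

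For non-existence when $0<\lambda<\lambda_*$, the dilation $u_\epsilon(x)=\epsilon^{3/2}u(\epsilon x)$ preserves mass and, using $\|w_2(\cdot/\epsilon)\|_{L^{3/2,\infty}}=\epsilon^2\|w_2\|_{L^{3/2,\infty}}\to 0$ together with the pointwise vanishing of $w_1$ at infinity, yields $\mathcal{E}(u_\epsilon)\to 0$ as $\epsilon\to 0$; combined with $I(\lambda)\geq 0$ from the definition of $\lambda_*$ this forces $I(\lambda)=0$. If a minimizer $u_*$ existed, then $\mathcal{E}(u_*)=0$ together with the positivity of the kinetic energy would force $\iint w|u_*|^2|u_*|^2<0$, so the mass rescaling $\sqrt{t/\lambda}\,u_*\in\mathcal{S}_t$ would have energy $\frac{t\|\nabla u_*\|_{L^2}^2}{2\lambda^2}(\lambda-t)<0$ for $t$ slightly larger than $\lambda$, contradicting the definition of $\lambda_*$. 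The hard part is the verification of strict subadditivity when $\alpha$ or $\lambda-\alpha$ falls at or below $\lambda_*$, and the exclusion of the vanishing alternative under such a broad hypothesis on $w$: in both steps the quantitative estimate through $K$ is essential to convert local-$L^2$ information into control of the non-local interaction.
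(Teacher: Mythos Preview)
Your outline matches the paper's strategy almost step for step: coercivity via the $K$-estimate with a splitting $\|w_2\|_{L^{3/2,\infty}}$ close to $C_2$; strict subadditivity through the mass rescaling $u\mapsto\sqrt{\theta}\,u$ (the paper phrases it as $I(\theta\lambda)<\theta I(\lambda)$ for $\theta>1$, then splits cases according to where $\alpha,\lambda-\alpha$ lie relative to $\lambda_*$); Lions's trichotomy; regularity by elliptic bootstrap; positivity by strong maximum principle; radiality by P\'olya--Szeg\H{o} plus Riesz rearrangement; and non-existence below $\lambda_*$ by the same mass-rescaling trick you describe.

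There is one genuine slip. You write that vanishing is excluded because $\|u_n\|_{L^p}\to0$ for $2<p<6$ would, ``through the coercivity estimate above, drive the interaction to $0$''. But your coercivity estimate bounds $\bigl|\int(w*|u|^2)|u|^2\bigr|$ by $\|w_1\|_\infty\lambda^2+K\|w_2\|_{L^{3/2,\infty}}\lambda\|\nabla u\|_{L^2}^2$, and neither term on the right tends to $0$ under vanishing: the mass is fixed and the sequence is only bounded in $\dot H^1$. The $K$-estimate is exactly the wrong tool here. What is actually needed, and what the paper does, is a truncation: write $w_j=w_j\mathbbm{1}_{|w_j|\le\delta}+w_j\mathbbm{1}_{|w_j|>\delta}$, so the small part contributes at most $\delta\lambda^2$, while the large parts lie in genuine Lebesgue spaces ($w_1\mathbbm{1}_{|w_1|>\delta}$ has finite-measure support since $w_1\to0$ at infinity, and $w_2\mathbbm{1}_{|w_2|>\delta}\in L^q$ for $q<3/2$), allowing H\"older with the intermediate $L^p$ norms of $u_n$ that do vanish. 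The same truncation device is needed in the dichotomy step to show the cross-interaction $\int(w*|u_k^{(1)}|^2)|\psi_k^{(2)}|^2\to0$ as the supports separate; strict subadditivity alone does not finish the job without first proving the energy splits as $\mathcal{E}(u_k^{(1)})+\mathcal{E}(\psi_k^{(2)})+o(1)$. Your closing remark that ``the quantitative estimate through $K$ is essential'' for these two steps is therefore backwards: $K$ drives coercivity and global existence, but the vanishing and dichotomy interaction estimates hinge on the truncation, which exploits the decay of $w_1$ and the weak-$L^{3/2}$ integrability of $w_2$ in a different way.
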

\begin{rem}\label{first scaling argument}
    It is important to point out that one cannot have a result similar to Theorem \ref{main theorem} for $w\in L^\infty(\mathbb{R}^3)+L^p(\mathbb{R}^3)$ with $p<3/2$, as the resulting functional might not be bounded from below: indeed, letting $w(x)=-\frac{1}{|x|^\alpha}$ with $2<\alpha<3$, we have $w(x)=w\mathbbm{1}_{|x|>R}+w\mathbbm{1}_{|x|\le R}\in L^\infty(\mathbb{R}^3)+L^p(\mathbb{R}^3)$ for some $1\le p<3/2$ with the $L^\infty$ part vanishing at infinity; then, for $u\in H^1(\mathbb{R}^3)$ and $\sigma>0$ let $u_\sigma(x)=\sigma^{-3/2}u(\frac{x}{\sigma})$ and compute
\begin{equation*}
    \mathcal{E}(u_\sigma)=\frac{1}{2\sigma^2}\int_{\mathbb{R}^3}|\nabla u|^2-\frac{1}{4\sigma^\alpha}\iint_{\mathbb{R}^3\times\mathbb{R}^3}|u(x)|^2\frac{1}{|x-y|^\alpha}|u(y)|^2\,\mathrm{d}x\mathrm{d}y\xrightarrow{\sigma\rightarrow0^+}-\infty,
\end{equation*}
so $\mathcal{E}$ is not bounded from below on any $\mathcal{S}_\lambda$.
\end{rem}
\begin{rem}[About $\lambda_*$]\label{rem about lambda_*}
    By an argument similar to the one in Remark \ref{first scaling argument} we can see that
    \begin{equation*}
        \exists\,\lambda>0\;:\; I(\lambda)<0\iff \exists\,u\in H^1\;:\;\int_{\mathbb{R}^3}\left(w\ast|u|^2\right)|u|^2<0,
    \end{equation*}
    hence $\lambda_*<\infty$. Moreover, in the proof of Theorem \ref{main theorem} we also prove that $I(\lambda)<0$ for every $\lambda>\lambda_*$.
    
    It is well known that for some specific short range potentials (e.g.~the Van der Waals-type potentials) we have $\lambda_*>0$; however, it is also known (see, for instance, \cite{regularity, Review Choquard}) that for $w=-\frac{1}{|x|^\alpha}$, $0<\alpha<2$ there exists ground states of any $L^2$ mass. Our framework is compatible with such a result, namely we will show that for such potentials we have. $\lambda_*=0$.
\end{rem}
\begin{rem}[About $\lambda^*$]\label{rem about lambda^*}
While $K$ is a universal constant, $C_2$ depends on $w$ and can vanish; in that case, we set $\lambda^*=\infty$. This is the case, for example, for any potential $w\in L^{3/2}(\mathbb{R}^3)$ and for $w(x)=-\frac{1}{|x|^\alpha}$, $0<\alpha<2$.
\end{rem}
\begin{rem}[About the regularity of the minimizer]\label{rem properties of minimizer}
    If $w$ has no $L^\infty$ part, then we can prove more integrability for the minimizer $u_*$; in Proposition \ref{specific regularity} we prove that if $w\in L^{3/2,\infty}(\mathbb{R}^3)$ then $u_*\in L^1(\mathbb{R}^3)$ and $u_*\in W^{2,r}(\mathbb{R}^3)$ for every $r>1$.
\end{rem}
We also discuss the global well-posedness of the Cauchy problem associated with \eqref{Hartree equation}; in this regard, the main result we prove is

\begin{thm}\label{evo thm}
    Let $0\not\equiv w=w_1+w_2\in L^\infty(\mathbb{R}^3)+L^{3/2,\infty}(\mathbb{R}^3)$ and $u_0\in H^1(\mathbb{R}^3)$ such that 
    \begin{equation}\label{smallness condition}
        K\|u_0\|^2_{L^2}\|w_2\|_{L^{3/2,\infty}}<2,
    \end{equation}
    where $K>0$ is defined in \eqref{def k}. Then there exists a unique $u\in C\left([0,+\infty);H^1\right)\cap C^1\left([0,+\infty);H^{-1}\right)$ solution to \begin{equation}\label{evolution equation}
    \begin{cases}
        i\partial_t u=-\Delta u+(w\ast|u|^2)u\\
        u(0,\cdot)=u_0\in H^1
    \end{cases}
\end{equation}
and the solution depends continuously on the initial datum.

Moreover,
    \begin{itemize}
        \item (Conservation of mass) $\|u(t)\|^2_{L^2}=\|u_0\|^2_{L^2}$ for every $t\ge0$.
        \item (Conservation of energy) $\mathcal{E}(u(t))=\mathcal{E}(u_0)$ for every $t\ge0$.
    \end{itemize}
\end{thm}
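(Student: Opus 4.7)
The plan is the standard two-step scheme for energy-subcritical nonlinear Schrödinger equations: I would first establish local well-posedness in $H^1$ by a Banach fixed point based on the Duhamel formula and the Strichartz machinery (as in Ginibre--Velo \cite{Ginibre Velo 2}), then globalize via the conservation of mass and energy together with the smallness hypothesis \eqref{smallness condition}. Throughout I would split $w = w_1 + w_2$ and treat the two pieces separately, since they live in qualitatively different function spaces.

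For local well-posedness I would set up the fixed point for
\[ u(t) = e^{it\Delta}u_0 - i\int_0^t e^{i(t-s)\Delta}\bigl[(w\ast|u|^2)u\bigr](s)\,ds \]
in a space $X_T = L^\infty_t([0,T];H^1_x)\cap L^q_t([0,T];W^{1,r}_x)$ with $(q,r)$ a Schrödinger-admissible pair. The $L^\infty$ piece is harmless: Young's inequality gives $\|w_1\ast|u|^2\|_{L^\infty}\le \|w_1\|_{L^\infty}\|u\|_{L^2}^2$, so $(w_1\ast|u|^2)u$ is a zero-order perturbation controlled by the standard Strichartz toolbox. For the singular piece, a Lorentz-space Young/Hardy--Littlewood--Sobolev estimate (essentially the same inequality underlying the definition \eqref{def k} of $K$) combined with $H^1\hookrightarrow L^6$ in dimension three gives a cubic Lipschitz bound on the nonlinearity $F(u)=(w\ast|u|^2)u$ of the form
\[ \|F(u)-F(v)\|_{N_T} \lesssim T^{\theta}\bigl(\|u\|_{X_T}^2+\|v\|_{X_T}^2\bigr)\|u-v\|_{X_T} \]
in a suitable dual-Strichartz norm $N_T$, with $\theta>0$. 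This yields existence, uniqueness, continuous dependence on $u_0\in H^1$, and the usual blow-up alternative.

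Conservation of mass and energy follow formally from multiplying \eqref{evolution equation} by $\bar u$ and by $\partial_t\bar u$ and taking imaginary, respectively real, parts; rigorously these identities are first obtained for smooth data, then transferred to $u_0\in H^1$ by approximation, using the continuous dependence from the local theory and the continuity of $\mathcal{E}$ on $H^1$, which is itself an easy consequence of the same Lorentz estimates. For global existence, set $M=\|u_0\|_{L^2}^2$; the definition \eqref{def k} of $K$ together with Young's inequality for the $L^\infty$ part yields
\[ \Bigl|\iint|u|^2 w(x-y)|u|^2\,dxdy\Bigr|\le \|w_1\|_{L^\infty}M^2 + K\|w_2\|_{L^{3/2,\infty}}M\|\nabla u\|_{L^2}^2, \]
so conservation of mass and energy give
\[ \Bigl(1-\tfrac{1}{2}KM\|w_2\|_{L^{3/2,\infty}}\Bigr)\|\nabla u(t)\|_{L^2}^2 \le 2\mathcal{E}(u_0) + \tfrac{1}{2}\|w_1\|_{L^\infty}M^2. \]
Under \eqref{smallness condition} the bracket on the left is strictly positive, providing a uniform bound on $\|u(t)\|_{H^1}$ on the maximal interval of existence; the blow-up alternative then forces $T_{\max}=+\infty$.

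The main obstacle is the nonlinear estimate for the $L^{3/2,\infty}$ part, which sits exactly at the critical $H^1$-scaling: only the Lorentz-refined Young/HLS inequality produces a usable estimate, and the small factor $T^\theta$ needed for the contraction has to be extracted carefully from a mixed Lorentz--Strichartz setup. A secondary, more standard, technical point is the rigorous justification of energy conservation at the $H^1$ level, which is handled by the approximation argument above, using the continuous-dependence statement from the local theory.
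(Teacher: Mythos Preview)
Your proposal is correct, and the globalization step matches the paper's argument almost verbatim. The difference lies in the local well-posedness: you invoke Strichartz estimates and work in a mixed space $L^\infty_t H^1_x\cap L^q_t W^{1,r}_x$, whereas the paper does a bare fixed point in $C([0,T];H^1)$ with no Strichartz machinery at all. The reason this works is that the Lorentz-space inequalities in Lemma~\ref{Technical Lemma} (specifically \eqref{tech 2} and \eqref{tech 3}) show that the nonlinearity $g(u)=(w\ast|u|^2)u$ maps $H^1$ to $H^1$ with a cubic bound $\|g(u)\|_{H^1}\lesssim\|u\|_{H^1}^3$ and is locally Lipschitz in $H^1$; hence Duhamel integrated in time picks up a full factor of $T$, and the contraction closes immediately. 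In particular, the concern you flag as the ``main obstacle''---extracting a small power $T^\theta$ for the $L^{3/2,\infty}$ piece via a Lorentz--Strichartz setup---does not arise: the $L^{3/2,\infty}$ singularity is exactly compensated by the $\dot H^1\hookrightarrow L^{6,2}$ embedding, so the nonlinearity is already subcritical as a map $H^1\to H^1$. Your route would also work, but the paper's is shorter and avoids the Strichartz layer entirely.
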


\begin{rem}
    If $C_2=\inf\{\|w_2\|_{L^{3/2,\infty}}\;:\;w=w_1+w_2\in L^\infty+L^{3/2,\infty}\}=0$, like in the case $w\in L^{3/2}(\mathbb{R}^3)$, we have global existence for initial data of every mass.
\end{rem}
When we plug the Ansatz $u(t,x)=e^{i\omega t}\psi(x)$, with $\omega\in\mathbb{R}$, into \eqref{evolution equation} we get the eigenvalue problem 
\begin{equation}\label{eigenvalue problem}
    -\Delta\psi-\left(w\ast|\psi|^2\right)\psi=-\omega\psi
\end{equation}
for $\psi\in H^1(\mathbb{R}^3)$. Such solutions, when they exist, are referred to as \textit{Hartree solitons}. We prove that these solitons (whose global existence is guaranteed with $\psi=u_*$, $\|u_*\|^2_{L^2}=\lambda$, $\omega=|I(\lambda)|$, $\lambda_*<\lambda<\lambda^*$ by Theorems \ref{main theorem} and \ref{evo thm}) are also \textit{orbitally stable}, i.e.~if $u_0$ is close to a ground state then the solution $u(t)$ of \eqref{evolution equation} will be close to a ground state for every $t\ge0$, see Theorem \ref{orbital stability thm}.
\subsection{Organization of the paper}
Our discussion is arranged as follows:
\begin{itemize}
    \item In Section 2 we briefly define Lorentz spaces, together with some of their properties; then, we prove the three main inequalities we use throughout this paper, namely \eqref{tech 1}, \eqref{tech 2}, \eqref{tech 3}. We then proceed in describing the variation of the concentration-compactness method we employ for proving the existence of a ground state.
    \item Section 3 is entirely dedicated to the proof of Theorem \ref{main theorem}, first proving some basic properties of the Hartree energy functional \eqref{Hartree Energy} and then applying the aforementioned concentration-compactness method. The last part of the section is devoted to proving positivity and smoothness of the minimizer.
    \item In Section 4 we focus on the dynamical problem \eqref{evolution equation}, first proving global existence of the solution (Theorem \ref{evo thm}) via a classical fixed point argument together with energy estimates, and then proving orbital stability of said solution (Theorem \ref{orbital stability thm}).
\end{itemize}

\section*{Acknowledgements}

This research was funded, in whole or in part, by the Agence Nationale de la Recherche (ANR), project ANR-22-CE92-0013; moreover, we acknowledge the support of the MUR grant "Dipartimento di Eccellenza 2023-2027'' of Dipartimento di Matematica, Politecnico di Milano.

\section{Preliminaries}
In this section, we collect several technical estimates. In the first subsection, we prove some functional estimates in Lorentz spaces that are used in Sections 3 and 4 in a crucial way to control the interaction terms of the Hartree energy functional. In the second subsection, we characterize Lions' concentration-compactness method as done in \cite{Lewin} to better suit with the $H^1$ framework.

\subsection{Functional Inequalities in Lorentz Spaces}\label{2.1}
For $1\le p<\infty$, $1\le q\le\infty$, we define the Lorentz space $L^{p,q}(\mathbb{R}^d)$ as the set of (equivalence classes of) measurable functions $f\,:\,\mathbb{R}^d\rightarrow\mathbb{C}$ such that the following quasi-norm
    \begin{equation*}
        \|f\|_{L^{p,q}}=p^{1/q}\left\|t|\{|f|>t\}|^{1/p}\right\|_{L^q((0,\infty),\mathrm{d}t/t)}
    \end{equation*}
    is finite. We indicated with $|X|$ the Lebesgue measure of a measurable set $X\subset\mathbb{R}^d$.
    
    In particular, for $1\le p<\infty$
    \begin{equation*}
        \|f\|_{L^{p,\infty}}=\sup_{t>0}\left(t|\{|f|>t\}|^{1/p}\right).
    \end{equation*}
Lorentz spaces are a true generalization of the usual Lebesgue spaces: indeed, for every $1<p<\infty$, we can identify $L^{p,p}$ with $L^p$ by the Cavalieri Principle. We also have the following embeddings, reminiscent of the standard $L^p$ ones, see \cite[Proposition 1.4.10]{Grafakos} and \cite[Theorem 7.1]{Peetre}:
\begin{lem}[Inclusion properties]\label{inclusion properties}
    The following inclusions hold:
    \begin{itemize}
        \item $L^{p,q_1}(\mathbb{R}^d)\subset L^{p,q_2}(\mathbb{R}^d)$ for every $1\le p<\infty$, $1\le q_1\le q_2\le\infty$, and the embedding is continuous.
        \item $\Dot{W}^{m,q}(\mathbb{R}^d)\subset L^{p,q}(\mathbb{R}^d)$ with $\frac{1}{p}=\frac{1}{q}-\frac{m}{d}$ for every $1< p<\frac{d}{m}$, and the embedding is continuous.
    \end{itemize}
\end{lem}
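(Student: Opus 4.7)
The plan is to reduce both inclusions to standard tools from rearrangement theory and to a Riesz-potential representation.

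For the first inclusion, I would switch to the equivalent form of the quasi-norm in terms of the non-increasing rearrangement $f^*(t)=\inf\{s>0\,:\,|\{|f|>s\}|\le t\}$, namely $\|f\|_{L^{p,q}}\simeq\|t^{1/p}f^*(t)\|_{L^q(\mathbb{R}_+,\mathrm{d}t/t)}$, which follows from the change of variable $\tau=d_f(t)$ in the definition given in the excerpt. The first step is the endpoint bound $\|f\|_{L^{p,\infty}}\lesssim\|f\|_{L^{p,q_1}}$: since $f^*$ is non-increasing, for $t\in[s/2,s]$ one has $s^{1/p}f^*(s)\le 2^{1/p}t^{1/p}f^*(t)$, whence for $q_1<\infty$
\begin{equation*}
(\log 2)\,[s^{1/p}f^*(s)]^{q_1}\le 2^{q_1/p}\int_{s/2}^{s}[t^{1/p}f^*(t)]^{q_1}\,\frac{\mathrm{d}t}{t}\le 2^{q_1/p}\|f\|_{L^{p,q_1}}^{q_1},
\end{equation*}
and taking the supremum in $s$ settles the case $q_2=\infty$. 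For $q_1\le q_2<\infty$ I would then write $[t^{1/p}f^*(t)]^{q_2}=[t^{1/p}f^*(t)]^{q_2-q_1}[t^{1/p}f^*(t)]^{q_1}$, bound the first factor by the already established $\|f\|_{L^{p,\infty}}^{q_2-q_1}$, and integrate against $\mathrm{d}t/t$.

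For the second inclusion, I would represent any $u\in\dot{W}^{m,q}(\mathbb{R}^d)$ via the Riesz potential of order $m$: one writes $u=I_m\ast g$ with $g=(-\Delta)^{m/2}u\in L^q$ and kernel $I_m(x)=c_{d,m}|x|^{m-d}$, first for Schwartz functions through the Fourier transform and then extending by a density/approximation argument. The crucial input is that $I_m\in L^{d/(d-m),\infty}(\mathbb{R}^d)$, which is obtained by a direct computation of its distribution function. Then I would invoke O'Neil's generalization of Young's convolution inequality to Lorentz spaces,
\begin{equation*}
\|k\ast h\|_{L^{p,q}}\le C\|k\|_{L^{r,\infty}}\|h\|_{L^{q,q}},\qquad 1+\tfrac{1}{p}=\tfrac{1}{r}+\tfrac{1}{q},
\end{equation*}
applied with $r=d/(d-m)$ and $k=I_m$. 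Since $L^{q,q}=L^q$ by Cavalieri's principle, this yields $\|u\|_{L^{p,q}}\lesssim\|D^m u\|_{L^q}$ with the prescribed scaling $1/p=1/q-m/d$.

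The main obstacle is really the second part: O'Neil's inequality is itself a nontrivial result requiring delicate rearrangement estimates, and making the identity $u=I_m\ast g$ rigorous for a general $u\in\dot{W}^{m,q}$ (rather than a Schwartz function) needs a density argument together with the mapping properties of $I_m$ to control the approximating sequence in the target quasi-norm $L^{p,q}$. The first part is essentially routine once the rearrangement formulation of the Lorentz quasi-norm is taken for granted, the only subtlety being the uniform treatment of the endpoint $q_2=\infty$, which is absorbed into the initial step.
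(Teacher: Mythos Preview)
The paper does not actually prove this lemma: it states it as a known fact and defers to \cite[Proposition 1.4.10]{Grafakos} for the first inclusion and \cite[Theorem 7.1]{Peetre} for the second. Your proof sketch is correct and is precisely the standard route taken in those references --- the rearrangement trick for the nesting $L^{p,q_1}\subset L^{p,q_2}$ is exactly Grafakos' argument, and the Riesz-potential representation combined with O'Neil's convolution inequality (which the paper itself records later as Lemma~\ref{Young Lorentz}) is Peetre's approach to the refined Sobolev embedding. One minor comment: since the paper defines $\dot W^{m,q}$ via integer-order derivatives $D^m u$, it is slightly cleaner to use the Sobolev integral formula $u=\sum_{|\alpha|=m}c_\alpha\, I_m\ast D^\alpha u$ rather than the fractional Laplacian, but this changes nothing of substance.
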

Since the Lorentz quasi-norm is invariant under rearrangements of the values of $f$, we can reformulate it as
\begin{equation*}
    \|f\|_{L^{p,q}}=\begin{cases}
        \displaystyle\left(\int_0^\infty(t^{1/p}f^*(t))^q\frac{\mathrm{d}t}{t}\right)^{1/q}\;&\text{if}\;1\le q<\infty\\
        \displaystyle\sup_{t>0} t^{1/p}f^*(t)\;&\text{if}\;q=\infty,
    \end{cases}
\end{equation*}
where $f^*$ is the decreasing rearrangement of $|f|$. Using this reformulation, one can prove that for $1~<~p~<~\infty$, if $f\in L^{p,\infty}$ then for every $\delta>0$ $f\mathbbm{1}_{|f|\ge\delta}\in L^q$ $\forall\;1\le q<p$.

We will use these extensions of the Hölder and Young inequalities to the Lorentz spaces, see \cite{Grafakos, Lemarié, Oneil, Yap}.

\begin{lem}[Hölder Inequality in Lorentz spaces]\label{Hölder Lorentz}
    For $1\le p,p_1,p_2<\infty$, $1\le q,q_1,q_2\le\infty$, there exists a constant $C>0$ such that
    \begin{equation}\label{Hölder}
        \|f_1 f_2\|_{L^{p,q}}\le C\|f_1\|_{L^{p_1,q_1}}\|f_2\|_{L^{p_2,q_2}},\hspace{5mm}\frac{1}{p}=\frac{1}{p_1}+\frac{1}{p_2},\;\frac{1}{q}=\frac{1}{q_1}+\frac{1}{q_2}
    \end{equation}
    whenever the right hand side is finite.
\end{lem}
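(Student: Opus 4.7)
The plan is to prove Lemma~\ref{Hölder Lorentz} using the decreasing rearrangement characterization of Lorentz quasi-norms already recalled in the excerpt, reducing everything to one-variable Hölder on the measure space $((0,\infty),\mathrm{d}t/t)$. The key structural fact I would establish first is the pointwise rearrangement inequality
\begin{equation*}
    (f_1 f_2)^*(t) \le f_1^*(t/2)\, f_2^*(t/2) \quad \text{for all } t>0,
\end{equation*}
which follows from the subadditivity of distribution functions: if $|f_1 f_2|>\tau_1\tau_2$ then either $|f_1|>\tau_1$ or $|f_2|>\tau_2$, so $\{|f_1 f_2|>\tau_1\tau_2\}$ has measure at most the sum of the measures of the two level sets, and one chooses $\tau_i = f_i^*(t/2)$.

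Assume first that $1\le q,q_1,q_2<\infty$. Using the rearrangement identity for $\|\cdot\|_{L^{p,q}}$ recalled just above the lemma and the pointwise bound above, after the change of variables $s=t/2$ we get
\begin{equation*}
    \|f_1 f_2\|_{L^{p,q}}^q \le C_{p,q} \int_0^\infty \bigl(s^{1/p_1} f_1^*(s)\bigr)^q \bigl(s^{1/p_2} f_2^*(s)\bigr)^q \,\frac{\mathrm{d}s}{s},
\end{equation*}
where I used $1/p = 1/p_1 + 1/p_2$ to split the factor $s^{q/p}$. Since $q/q_1 + q/q_2 = 1$ by hypothesis, the usual Hölder inequality with conjugate exponents $q_1/q$ and $q_2/q$ applied to the measure $\mathrm{d}s/s$ separates the two factors, and what remains are precisely (powers of) $\|f_i\|_{L^{p_i,q_i}}$ up to the $p_i^{1/q_i}$ constants in the definition. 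Taking the $q$-th root finishes this case.

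The remaining task is to handle the endpoint cases where some of $q,q_1,q_2$ equal $\infty$. If $q_1=q_2=\infty$, then $q=\infty$ and one reads off directly from the supremum formulation that $t^{1/p}(f_1 f_2)^*(t)\le (t/2)^{1/p_1}f_1^*(t/2)\cdot (t/2)^{1/p_2}f_2^*(t/2)\cdot 2^{1/p}$, giving the result. If exactly one of the $q_i$ is $\infty$, one estimates the corresponding factor in the sup norm and leaves the Lebesgue integral on the other factor, reducing to the classical Hölder inequality on $((0,\infty),\mathrm{d}t/t)$ with a single finite exponent. I do not expect any deep obstacle; the only delicate point is producing the sharp change of variables so as to keep the constant finite and independent of $f_1,f_2$, and checking carefully that in every endpoint configuration the exponent relations $1/p=1/p_1+1/p_2$ and $1/q=1/q_1+1/q_2$ are used consistently with the convention $1/\infty=0$.
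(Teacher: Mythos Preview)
Your argument is correct and is in fact the standard proof of Hölder's inequality in Lorentz spaces via decreasing rearrangements (this is essentially O'Neil's original approach). The paper, however, does not prove Lemma~\ref{Hölder Lorentz} at all: it simply states the inequality and refers the reader to the literature \cite{Grafakos, Lemarié, Oneil, Yap}. So there is nothing to compare against; you have supplied a complete self-contained proof where the paper chose to cite one.
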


\begin{lem}[Young Inequality in Lorentz spaces]\label{Young Lorentz}
    For $1< p,p_1,p_2<\infty$, $1\le q,q_1,q_2\le\infty$, there exists a constant $C>0$ such that
    \begin{equation}\label{Young}
        \|f_1\ast f_2\|_{L^{p,q}}\le C\|f_1\|_{L^{p_1,q_1}}\|f_2\|_{L^{p_2,q_2}},\hspace{5mm}1+\frac{1}{p}=\frac{1}{p_1}+\frac{1}{p_2},\;\frac{1}{q}=\frac{1}{q_1}+\frac{1}{q_2}
    \end{equation}
    whenever the right hand site is finite. Moreover, for $1<p<\infty$, $1\le q\le\infty$ there exists $C>0$ such that
    \begin{equation}\label{Young2}
        \|f_1\ast f_2\|_{L^\infty}\le C\|f_1\|_{L^{p,q}}\|f_2\|_{L^{p',q'}},\hspace{5mm}\frac{1}{p}+\frac{1}{p'}=1=\frac{1}{q}+\frac{1}{q'}.
    \end{equation}
\end{lem}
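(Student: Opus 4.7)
The plan is to treat the two inequalities separately, beginning with the $L^\infty$ endpoint \eqref{Young2}, which reduces directly to the Hölder inequality in Lorentz spaces, and then handling \eqref{Young} via real interpolation from the classical Young inequality.

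For \eqref{Young2}, I would observe that for each $x\in\mathbb{R}^d$,
$$|(f_1\ast f_2)(x)|\le\int_{\mathbb{R}^d}|f_1(x-y)|\,|f_2(y)|\,dy = \|\,\tau_x f_1\cdot f_2\,\|_{L^1},$$
where $\tau_x f_1(y):=f_1(x-y)$. Since Lorentz quasi-norms are translation-invariant and $L^{1,1}$ coincides with $L^1$ (up to a constant), applying Lemma \ref{Hölder Lorentz} with the conjugacies $1/p+1/p'=1$ and $1/q+1/q'=1$ gives
$$|(f_1\ast f_2)(x)|\le C\|f_1\|_{L^{p,q}}\|f_2\|_{L^{p',q'}}.$$
Taking the supremum in $x$ yields \eqref{Young2}. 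The only care required is that the target exponent $p=1$ is allowed in Lemma \ref{Hölder Lorentz}.

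For \eqref{Young}, I would fix $f_1\in L^{p_1,q_1}$ and view $T:f_2\mapsto f_1\ast f_2$ as a linear operator. The first step is to establish the weak-type endpoint: when $f_1\in L^{p_1,\infty}$ and $f_2\in L^{p_2,\infty}$, one has $f_1\ast f_2\in L^{p,\infty}$. The standard route is to split $f_1=f_1\mathbbm{1}_{|f_1|\le t}+f_1\mathbbm{1}_{|f_1|>t}$ at an appropriate threshold $t$, noting that each piece lies in $L^a\cap L^b$ for $a<p_1<b$, and to apply the classical Lebesgue Young inequality to each piece, then optimize in $t$. One then promotes these weak bounds to the Lorentz-refined statement via the Marcinkiewicz real interpolation theorem applied in each factor of the bilinear map, using the identification $(L^{a_0},L^{a_1})_{\theta,q}=L^{a,q}$ with $1/a=(1-\theta)/a_0+\theta/a_1$; this is precisely what produces the relation $1/q=1/q_1+1/q_2$. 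A slicker equivalent route is O'Neil's rearrangement inequality
$$(f_1\ast f_2)^{**}(t)\le t\,f_1^{**}(t)\,f_2^{**}(t)+\int_t^\infty f_1^*(s)\,f_2^*(s)\,ds,$$
from which the full Lorentz bound follows by integration against the measure $t^{q/p}\,dt/t$ and standard Hardy-type estimates on $f^{**}$ in terms of $f^*$.

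The main technical obstacle is the interpolation bookkeeping, where one must juggle the two pairs $(p_i,q_i)$ simultaneously and track the constants across the two interpolation steps. Since the result is classical and I use it only as a black box in the sequel, I would legitimately confine the exposition to the sketch above with pointers to \cite{Grafakos, Lemarié, Oneil, Yap}.
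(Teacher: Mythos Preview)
Your proposal is correct. In fact, the paper does not prove this lemma at all: it is stated as a known result with pointers to \cite{Grafakos, Lemarié, Oneil, Yap}, exactly the references you cite at the end of your sketch. Your write-up therefore goes strictly beyond what the paper does, and the two routes you outline (pointwise H\"older for \eqref{Young2}, and either bilinear real interpolation or O'Neil's rearrangement inequality for \eqref{Young}) are the standard ones found in those references. Nothing more is needed here.
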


We also have the following estimates, which will be used several times throughout this section. To be more concise, we introduce the following notation: $\|\cdot\|_X\lesssim \|\cdot\|_Y$ iif there exists $C>0$ such that $\|\cdot\|_X\le C\|\cdot\|_Y$. Following the ideas from \cite{Br Fa Pa 1, Br Fa Pa 2}, we prove the following
\begin{lem}[Technical Inequalities]\label{Technical Lemma}
\color{white}.\color{black}
    \begin{enumerate}
        \item Let $u_1$, $u_2\in L^2(\mathbb{R}^3)$ and $w\in L^\infty(\mathbb{R}^3)$. Then
        \begin{equation}\label{tech 1}
            \|w\ast(u_1u_2)\|_{L^\infty}\lesssim\|w\|_{L^\infty}\|u_1\|_{L^2}\|u_2\|_{L^2}.
        \end{equation}
        \item Let $u_1$, $u_2\in\Dot{H^1}(\mathbb{R}^3)$ and $w\in L^{3/2,\infty}(\mathbb{R}^3)$. Then
        \begin{equation}\label{tech 2}
        \|w\ast(u_1u_2)\|_{L^\infty}\lesssim\|w\|_{L^{3/2,\infty}}\|u_1\|_{\Dot{H^1}}\|u_2\|_{\Dot{H^1}}.
        \end{equation}
        \item Let $u_1\in L^2(\mathbb{R}^3)$, $u_2,u_3\in\Dot{H^1}(\mathbb{R}^3)$ and $w\in L^{3/2,\infty}(\mathbb{R}^3)$. Then
        \begin{equation}\label{tech 3}
            \|\left(w\ast(u_1u_2)\right)u_3\|_{L^2}\lesssim\|w\|_{L^{3/2,\infty}}\|u_1\|_{L^2}\|u_2\|_{\Dot{H^1}}\|u_3\|_{\Dot{H^1}}
        \end{equation}
    \end{enumerate}
\end{lem}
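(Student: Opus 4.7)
The plan is to prove the three inequalities sequentially, each time reducing to the tools already listed in the section: the Young and Hölder inequalities in Lorentz spaces (Lemmas \ref{Hölder Lorentz} and \ref{Young Lorentz}), the Sobolev-type embedding $\dot{W}^{1,2}(\mathbb{R}^3)\hookrightarrow L^{6,2}(\mathbb{R}^3)$ from Lemma \ref{inclusion properties}, and the monotonicity $L^{p,q_1}\hookrightarrow L^{p,q_2}$ for $q_1\le q_2$. Inequality \eqref{tech 1} is the easy one: by the classical Young inequality $L^\infty * L^1\hookrightarrow L^\infty$ and Cauchy--Schwarz, $\|w*(u_1u_2)\|_{L^\infty}\le\|w\|_{L^\infty}\|u_1u_2\|_{L^1}\le\|w\|_{L^\infty}\|u_1\|_{L^2}\|u_2\|_{L^2}$.

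For \eqref{tech 2}, I would apply \eqref{Young2} with the dual pair $(p,q)=(3/2,\infty)$ and $(p',q')=(3,1)$, obtaining
\[
\|w*(u_1u_2)\|_{L^\infty}\lesssim\|w\|_{L^{3/2,\infty}}\|u_1u_2\|_{L^{3,1}}.
\]
Then I would split the product in $L^{3,1}$ via the Lorentz Hölder inequality \eqref{Hölder} with exponents $\tfrac{1}{3}=\tfrac{1}{6}+\tfrac{1}{6}$ and $1=\tfrac{1}{2}+\tfrac{1}{2}$, giving $\|u_1u_2\|_{L^{3,1}}\lesssim\|u_1\|_{L^{6,2}}\|u_2\|_{L^{6,2}}$. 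Finally, Lemma \ref{inclusion properties} yields $\|u_i\|_{L^{6,2}}\lesssim\|u_i\|_{\dot{H}^1}$ (with $\tfrac{1}{6}=\tfrac{1}{2}-\tfrac{1}{3}$), and chaining finishes \eqref{tech 2}.

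For \eqref{tech 3}, which is the trickiest step because of the mismatch between the $L^2$ norm on $u_1$ and the two $\dot{H}^1$ norms, I would first use the Lorentz Hölder \eqref{Hölder} with $\tfrac{1}{2}=\tfrac{1}{3}+\tfrac{1}{6}$ and $\tfrac{1}{2}=\tfrac{1}{\infty}+\tfrac{1}{2}$ to write
\[
\|(w*(u_1u_2))u_3\|_{L^2}\lesssim\|w*(u_1u_2)\|_{L^{3,\infty}}\|u_3\|_{L^{6,2}}.
\]
Then Young in Lorentz \eqref{Young} with $1+\tfrac{1}{3}=\tfrac{2}{3}+\tfrac{2}{3}$ gives $\|w*(u_1u_2)\|_{L^{3,\infty}}\lesssim\|w\|_{L^{3/2,\infty}}\|u_1u_2\|_{L^{3/2,\infty}}$, and one more application of Hölder in Lorentz with $\tfrac{2}{3}=\tfrac{1}{2}+\tfrac{1}{6}$ gives $\|u_1u_2\|_{L^{3/2,\infty}}\lesssim\|u_1\|_{L^{2,\infty}}\|u_2\|_{L^{6,\infty}}$. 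To conclude, I would use the inclusions $\|u_1\|_{L^{2,\infty}}\lesssim\|u_1\|_{L^{2,2}}=\|u_1\|_{L^2}$ and $\|u_2\|_{L^{6,\infty}}\lesssim\|u_2\|_{L^{6,2}}$, together with the Sobolev--Lorentz embedding $\|u_i\|_{L^{6,2}}\lesssim\|u_i\|_{\dot{H}^1}$ for $i=2,3$.

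The only real obstacle is bookkeeping: at each step I have to pick Lorentz secondary exponents whose reciprocals add up correctly, while keeping $w$ in $L^{3/2,\infty}$ (forcing $q=\infty$ on its slot) and placing $L^2$ functions where only the primary index needs to match. Once the exponent budget is laid out, every inequality invoked is exactly a named lemma from the preceding subsection, so no additional ideas are required.
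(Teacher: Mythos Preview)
Your proposal is correct and follows essentially the same route as the paper: classical Young plus Cauchy--Schwarz for \eqref{tech 1}, the chain $L^\infty\leftarrow L^{3/2,\infty}\ast L^{3,1}\leftarrow L^{6,2}\cdot L^{6,2}\leftarrow \dot H^1\cdot\dot H^1$ for \eqref{tech 2}, and the exact same sequence of Lorentz H\"older/Young steps with the same exponent choices $(3,\infty)\cdot(6,2)$, $(3/2,\infty)\ast(3/2,\infty)$, $(2,\infty)\cdot(6,\infty)$ for \eqref{tech 3}. The exponent bookkeeping you lay out matches the paper's proof line by line.
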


\begin{proof}\color{white}.\\
\color{black}
    \eqref{tech 1} follows directly from the classical Young and Hölder inequalities:
\begin{equation*}
\|w\ast(u_1u_2)\|_{L^\infty}\le\|w\|_{L^\infty}\|u_1u_2\|_{L^1}\lesssim\|w\|_{L^\infty}\|u_1\|_{L^2}\|u_2\|_{L^2}.
\end{equation*}
To prove \eqref{tech 2}, we start applying Young and Hölder inequalities \eqref{Young2} and \eqref{Hölder},
\begin{equation*}
    \begin{split}
    \|w\ast(u_1u_2)\|_{L^\infty}&\lesssim\|w\|_{L^{3/2,\infty}}\|u_1u_2\|_{L^{3,1}}\lesssim\|w\|_{L^{3/2,\infty}}\|u_1\|_{L^{6,2}}\|u_2\|_{L^{6,2}}\\
    &\lesssim\|w\|_{L^{3/2,\infty}}\|u_1\|_{\Dot{H^1}}\|u_2\|_{\Dot{H^1}}
    \end{split}
\end{equation*}
as $\Dot{H^1}(\mathbb{R}^3)\subset L^{6,2}(\mathbb{R}^3)$ continuously by Lemma \ref{inclusion properties}.

To prove \eqref{tech 3}, we use twice Hölder inequality \eqref{Hölder} and once Young inequality \eqref{Young2},
\begin{equation*}
    \begin{split}
        \|\left(w\ast(u_1u_2)\right)u_3\|_{L^2}&\lesssim\|w\ast(u_1u_2)\|_{L^{3,\infty}}\|u_3\|_{L^{6,2}}\lesssim\|w\|_{L^{3/2,\infty}}\|u_1u_2\|_{L^{3/2,\infty}}\|u_3\|_{L^{6,2}}\\
        &\lesssim\|w\|_{L^{3/2,\infty}}\|u_1\|_{L^{2,\infty}}\|u_2\|_{L^{6,\infty}}\|u_3\|_{L^{6,2}}\\
        &\lesssim\|w\|_{L^{3/2,\infty}}\|u_1\|_{L^2}\|u_2\|_{L^{6,2}}\|u_3\|_{L^{6,2}}
    \end{split}
\end{equation*}
by Lemma \ref{inclusion properties}. Finally, we can estimate the terms $u_2$ and $u_3$ as we did for the proof of \eqref{tech 2}.
\end{proof}

\subsection{Concentration Compactness Results}

The key result we use for proving the existence of a ground state is Lions' concentration-compactness principle; in this section, we briefly recall the original Concentration-Compactness principle as stated by Lions \cite[Lemma I.1]{Lions} without proving it, and then we adapt it to the $H^1$ framework as in \cite{Lewin}; to do this, we also use the \textit{bubble decomposition} of a sequence, as introduced in \cite{Brezis Coron, Struwe} and later used also in \cite{Gerard}, together with some ideas from \cite{Lieb cc}.
\begin{lem}[Concentration-Compactness Principle]\label{conc comp lemma}
    Let $(\rho_n)_{n\in\mathbb{N}}\subset L^1(\mathbb{R}^d)$ such that $\rho_n\ge0$ and $\|\rho_n\|_{L^1}=\lambda$ where $\lambda>0$ is fixed. Then there exists a subsequence $(\rho_{n_k})_{k\in\mathbb{N}}$ such that one of the following three possibilities occurs:
    \begin{enumerate}
        \item (Compactness) There exists $(y_k)_{k\in\mathbb{N}}\subset\mathbb{R}^d$ such that for every $\varepsilon>0$ there exists $0<R<\infty$ such that
        \begin{equation}\label{ccl comp}
            \int_{B_R(y_k)}\rho_{n_k}\ge\lambda-\varepsilon;
        \end{equation}
        \item (Vanishing) For every $0<R<\infty$ 
        \begin{equation}
\lim_{k\rightarrow\infty}\sup_{y\in\mathbb{R}^d}\int_{B_R(y)}\rho_{n_k}=0;
\end{equation}
        \item (Dichotomy) There exists $0<\alpha<\lambda$ such that for every $\varepsilon>0$ there exists $k_0\in\mathbb{N}$ and non-negative $ \rho_k^{(1)},\rho_k^{(2)}\in L^1(\mathbb{R}^d)$ such that for every $k\ge k_0$
        \begin{equation}\label{ccl dic}
            \begin{cases}
                \left\|\rho_{n_k}-(\rho_k^{(1)}+\rho_k^{(2)})\right\|_{L^1}\le\varepsilon\\
                \left|\alpha-\|\rho_k^{(1)}\|_{L^1}\right|<\varepsilon,\;\left|(\lambda-\alpha)-\|\rho_k^{(2)}\|_{L^1}\right|<\varepsilon\\
                \emph{dist}\left(\emph{supp}(\rho_k^{(1)}),\emph{supp}(\rho_k^{(2)})\right)\rightarrow\infty.
            \end{cases}
        \end{equation}
    \end{enumerate}
\end{lem}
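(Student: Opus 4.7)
The plan is to argue via Lévy's concentration functions $Q_n(t) := \sup_{y \in \mathbb{R}^d} \int_{B_t(y)} \rho_n$. Each $Q_n$ is non-decreasing on $(0,\infty)$, bounded above by $\lambda$, and satisfies $Q_n(t) \to \lambda$ as $t \to \infty$ by dominated convergence. By a Helly-type diagonal extraction over a countable dense set of radii, I would pass to a subsequence $(\rho_{n_k})$ so that $Q_{n_k}(t) \to Q(t)$ at every continuity point of a non-decreasing function $Q \colon (0,\infty) \to [0,\lambda]$. Set $\alpha := \lim_{t\to\infty} Q(t) \in [0,\lambda]$; the three alternatives correspond exactly to $\alpha = 0$, $\alpha = \lambda$, and $0 < \alpha < \lambda$.

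If $\alpha = 0$, then $Q_{n_k}(R) \to 0$ for every $R > 0$, which is precisely the vanishing alternative. In the compactness case $\alpha = \lambda$, I would first fix $\varepsilon_0 < \lambda/3$ and $R_0$ with $Q(R_0) > \lambda - \varepsilon_0$, and for large $k$ pick centers $y_k$ with $\int_{B_{R_0}(y_k)} \rho_{n_k} > \lambda - \varepsilon_0$. For any $\varepsilon \in (0,\varepsilon_0)$ there is $R_\varepsilon$ and centers $z_k^{(\varepsilon)}$ capturing mass $> \lambda - \varepsilon$; since both balls carry more than $\lambda/2$, they cannot be disjoint, so $|y_k - z_k^{(\varepsilon)}| \le R_0 + R_\varepsilon$. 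Hence $B_{R_0 + 2R_\varepsilon}(y_k)$ contains $B_{R_\varepsilon}(z_k^{(\varepsilon)})$, yielding \eqref{ccl comp} with the fixed sequence $(y_k)$ and radius $R = R_0 + 2R_\varepsilon$.

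In the dichotomy case $0 < \alpha < \lambda$, given $\varepsilon > 0$ I would choose $R$ with $Q(R) > \alpha - \varepsilon/4$ and a much larger $R'$, and pick $y_k$ with $\int_{B_R(y_k)} \rho_{n_k} > \alpha - \varepsilon/2$. Since $Q_{n_k}(R') \to Q(R') \le \alpha$, the annular mass on $B_{R'}(y_k) \setminus B_R(y_k)$ is less than $\varepsilon$ for large $k$. Setting $\rho_k^{(1)} := \rho_{n_k} \mathbf{1}_{B_R(y_k)}$ and $\rho_k^{(2)} := \rho_{n_k} \mathbf{1}_{B_{R'}(y_k)^c}$ gives the three bullet conditions of \eqref{ccl dic} with support separation at least $R' - R$. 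Running the construction along $\varepsilon_j \downarrow 0$ with $R_j' - R_j \to \infty$ and diagonalizing produces supports at distance tending to infinity.

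The main obstacle is the upgrade in the compactness case from the $\varepsilon$-dependent near-maximizers $z_k^{(\varepsilon)}$ of the supremum defining $Q_{n_k}(R_\varepsilon)$ to a single sequence $(y_k)$ valid for every $\varepsilon$; this is exactly what the ball-overlap argument above achieves, and it is what forces the uniform-in-$\varepsilon$ formulation of \eqref{ccl comp} rather than a weaker $\varepsilon$-dependent statement. A secondary, purely bookkeeping, point is arranging the diagonal extraction in the dichotomy case so that the three conditions of \eqref{ccl dic} hold simultaneously along one subsequence; this follows by taking $\varepsilon_j = 2^{-j}$ and refining subsequences at each step.
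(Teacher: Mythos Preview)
Your proposal is correct and follows precisely the classical argument of Lions via the L\'evy concentration functions $Q_n(t)=\sup_y\int_{B_t(y)}\rho_n$, Helly selection, and the trichotomy on $\alpha=\lim_{t\to\infty}Q(t)$. The paper itself does not prove this lemma at all: it explicitly states that it ``recall[s] the original Concentration-Compactness principle as stated by Lions \cite[Lemma I.1]{Lions} without proving it,'' and then moves on to adapt the consequences to the $H^1$ setting. So there is no paper proof to compare against; your sketch \emph{is} essentially Lions' original proof.

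One minor remark on the dichotomy step: as written, for fixed $\varepsilon$ your choice of a single outer radius $R'$ gives support separation $R'-R$, which is constant in $k$, whereas the statement asks for $\mathrm{dist}\to\infty$ as $k\to\infty$ (with $\varepsilon$ fixed). You acknowledge this and fix it by diagonalizing over $\varepsilon_j\downarrow0$; an equivalent and slightly cleaner route is Lions' original device of choosing $R_k'\to\infty$ slowly enough that $Q_{n_k}(R_k')\to\alpha$ still holds (possible since $Q(t)\nearrow\alpha$ and $Q_{n_k}\to Q$ pointwise), which gives the divergent separation directly without a second extraction. Either way the argument closes.
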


We adapt this to our setting, characterizing the non-compact cases of Lemma \ref{conc comp lemma} using $\rho_n=|u_n|^2$: first, we use the characterization of vanishing sequences bounded in $H^1$ proved in in \cite[Lemma 12]{Lewin}:
    
\begin{lem}[Characterization of vanishing]\label{Vanishing}
    Let $(u_n)_{n\in\mathbb{N}}$ be a bounded sequence in $H^1(\mathbb{R}^3)$. Then $\displaystyle\lim_{n\rightarrow\infty}\sup_{x\in\mathbb{R}^3}\int_{B_R(x)}|u_n|^2=0$ for every $0<R<\infty$ if and only if $u_n\rightarrow0$ strongly in $L^p$ for all $2<p<6$.

\end{lem}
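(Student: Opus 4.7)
The ``easy'' direction is the reverse implication. The plan is to fix any $p \in (2,6)$ and apply Hölder's inequality on a ball:
\begin{equation*}
\int_{B_R(x)} |u_n|^2 \le |B_R|^{(p-2)/p} \, \|u_n\|_{L^p(B_R(x))}^2 \le |B_R|^{(p-2)/p} \, \|u_n\|_{L^p(\mathbb{R}^3)}^2,
\end{equation*}
where the right-hand side is uniform in $x$. Taking the supremum over $x$ and letting $n \to \infty$ yields the vanishing condition.

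The forward direction is the nontrivial part. First, I would observe that it suffices to prove $u_n \to 0$ strongly in $L^{p_0}(\mathbb{R}^3)$ for a single $p_0 \in (2,6)$. Indeed, $(u_n)$ is bounded in $H^1$ and hence in $L^2 \cap L^6$ (the latter by the Sobolev embedding $H^1 \hookrightarrow L^6$), so the log-convexity of $L^p$ norms extends convergence at one $p_0$ to every $p \in (2,6)$. I would choose $p_0 = 3$ for simplicity.

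Next, I would cover $\mathbb{R}^3$ by a lattice of balls $(B_R(x_i))_{i}$ with bounded overlap $N$ independent of $R$ (so that $\sum_i \int_{B_R(x_i)} f \le N \int_{\mathbb{R}^3} f$ for $f \ge 0$, and each point is covered). On each ball, Hölder's inequality yields
$\|u\|_{L^3(B_R(x_i))}^3 \le \|u\|_{L^2(B_R(x_i))}^{3/2}\|u\|_{L^6(B_R(x_i))}^{3/2}$, and the Sobolev embedding on the ball gives $\|u\|_{L^6(B_R(x_i))} \le C \|u\|_{H^1(B_R(x_i))}$. Writing $y_i = \|u_n\|_{L^2(B_R(x_i))}^2$, $z_i = \|u_n\|_{H^1(B_R(x_i))}^2$, and $A_n = \sup_x \int_{B_R(x)} |u_n|^2$, I would use $y_i^{3/4} = y_i^{1/4} \cdot y_i^{1/2} \le A_n^{1/4} \, y_i^{1/2}$, followed by the Cauchy--Schwarz inequality on the sum over $i$:
\begin{equation*}
\|u_n\|_{L^3(\mathbb{R}^3)}^3 \le \sum_i \|u_n\|_{L^3(B_R(x_i))}^3 \le C A_n^{1/4} \Bigl(\sum_i y_i\Bigr)^{1/2} \Bigl(\sum_i z_i^{3/2}\Bigr)^{1/2}.
\end{equation*}
Finite overlap gives $\sum_i y_i \le N\|u_n\|_{L^2}^2$, and the bound $z_i^{1/2} \le \|u_n\|_{H^1}$ combined with finite overlap yields $\sum_i z_i^{3/2} \le N \|u_n\|_{H^1}^3$; both are uniformly controlled by the $H^1$-bound on $(u_n)$. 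Therefore $\|u_n\|_{L^3}^3 \lesssim A_n^{1/4} \to 0$, and the interpolation step of the previous paragraph promotes this to $\|u_n\|_{L^p} \to 0$ for every $p \in (2,6)$.

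The main technical point I expect is the exponent bookkeeping in the third paragraph: one must split the power $3/4$ on $y_i$ so that (i) enough of a positive power of the vanishing supremum $A_n$ is extracted, and (ii) the remaining powers of $y_i$ and $z_i$ can still be summed using only finite overlap and the $H^1$ bound. Extracting $A_n$ too greedily (say, $y_i^{3/4} \le A_n^{3/4}$) destroys summability, while extracting too little leaves no vanishing factor; the choice $y_i^{3/4} \le A_n^{1/4} y_i^{1/2}$ is precisely the balancing point that makes the Cauchy--Schwarz step close.
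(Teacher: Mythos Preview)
Your proof is correct. The paper, however, does not give its own proof of this lemma: it simply cites \cite[Lemma 12]{Lewin} and uses the result as a black box. The argument you wrote is the standard proof of Lions' vanishing lemma (cover by balls with bounded overlap, interpolate locally between $L^2$ and $L^6$ via H\"older and Sobolev, extract a positive power of the vanishing supremum, then sum using Cauchy--Schwarz and the finite-overlap bound), which is essentially what one finds in the cited reference. So there is no alternative strategy in the paper to compare against; you have supplied the proof that the paper outsources.
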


To exploit this, we define the auxiliary functional $\mathcal{E}^\text{van}$ as the original energy functional $\mathcal{E}$ to which we have removed all the terms which go to $0$ as $u_n\rightarrow0$ in $L^p$, $2<p<6$, i.e.
\begin{equation}\label{Definition vanishing energy}
    \mathcal{E}^\text{van}(u)=\|u\|_{\Dot{H}^1}^2.
\end{equation}
Indeed, we have the following

\begin{lem}\label{int vanishes in subcrit Lp}
    Let $w\in L^\infty(\mathbb{R}^3)+L^{3/2,\infty}(\mathbb{R}^3)$ satisfy the hypotheses of Theorem \ref{main theorem} and let $(u_n)_{n\in\mathbb{N}}$ be a bounded sequence in $ H^1(\mathbb{R}^3)$ such that $\|u_n\|^2_{L^2}=\lambda$ for every $n$ and $u_n\rightarrow0$ in $L^p$ for every $p\in(2,6)$. Then
    \begin{equation*}
        \int_{\mathbb{R}^3}\left(w\ast|u_n|^2\right)|u_n|^2\xrightarrow{n\rightarrow\infty}0.
    \end{equation*}
\end{lem}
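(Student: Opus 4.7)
The plan is to split $w=w_1+w_2$ according to the hypothesis and show separately that each contribution vanishes. In both cases the estimate will be reduced to a quantity controlled by a power of $\|u_n\|_{L^3}$, which tends to zero by Lemma~\ref{Vanishing} since $3\in(2,6)$.

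For the weak-$L^{3/2}$ piece, I would combine Young's inequality and Hölder's inequality in Lorentz spaces (Lemmas~\ref{Young Lorentz} and~\ref{Hölder Lorentz}) to derive
\begin{equation*}
\left|\int_{\mathbb{R}^3}(w_2\ast |u_n|^2)|u_n|^2\right|\lesssim \|w_2\|_{L^{3/2,\infty}}\,\|u_n\|_{L^3}^4.
\end{equation*}
The chain I have in mind is the following: Young with $L^{3/2,\infty}\ast L^{3/2,3}\subset L^{3,3}=L^3$ (the primary-index relation $1+\tfrac{1}{3}=\tfrac{2}{3}+\tfrac{2}{3}$ and the secondary-index relation $\tfrac{1}{3}=\tfrac{1}{\infty}+\tfrac{1}{3}$ both hold), together with the embedding $L^{3/2}=L^{3/2,3/2}\hookrightarrow L^{3/2,3}$ and the identity $\||u_n|^2\|_{L^{3/2}}=\|u_n\|_{L^3}^{2}$, yields $\|w_2\ast|u_n|^2\|_{L^3}\lesssim \|w_2\|_{L^{3/2,\infty}}\|u_n\|_{L^3}^{2}$; pairing with $|u_n|^2\in L^{3/2}$ by ordinary Hölder with dual exponents $(3,3/2)$ then closes the estimate. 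Since Lemma~\ref{Vanishing} gives $\|u_n\|_{L^3}\to 0$, this contribution vanishes.

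For the $L^\infty$ piece, the decay at infinity of $w_1$ is essential, since the naive bound $|\int (w_1\ast|u_n|^2)|u_n|^2|\lesssim \|w_1\|_{L^\infty}\lambda^2$ is merely uniform in $n$. Fix $\varepsilon>0$ and, using $w_1(x)\to 0$ as $|x|\to\infty$, pick $R>0$ with $\|w_1\mathbbm{1}_{|x|>R}\|_{L^\infty}<\varepsilon/\lambda^2$. Split $w_1=w_1\mathbbm{1}_{|x|\le R}+w_1\mathbbm{1}_{|x|>R}$. For the tail, the technical inequality~\eqref{tech 1} with $u_1=u_2=u_n$ combined with $\||u_n|^2\|_{L^1}=\lambda$ gives
\begin{equation*}
\left|\int_{\mathbb{R}^3}\bigl((w_1\mathbbm{1}_{|x|>R})\ast|u_n|^2\bigr)|u_n|^2\right|\lesssim \|w_1\mathbbm{1}_{|x|>R}\|_{L^\infty}\,\lambda^2 < \varepsilon.
\end{equation*}
The compactly supported bounded remainder $w_1\mathbbm{1}_{|x|\le R}$ lies in $L^{3/2}(\mathbb{R}^3)\hookrightarrow L^{3/2,\infty}(\mathbb{R}^3)$ with an $R$-dependent norm, so the argument of the preceding paragraph applies verbatim and the corresponding integral tends to zero as $n\to\infty$. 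Letting first $n\to\infty$ and then $\varepsilon\to 0$ concludes.

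The main (mild) obstacle is the bookkeeping of the secondary Lorentz indices in the Young/Hölder chain producing the key estimate $|\int(w_2\ast|u_n|^2)|u_n|^2|\lesssim \|w_2\|_{L^{3/2,\infty}}\|u_n\|_{L^3}^4$; once this interpolation-style bound is in place, the spatial truncation of $w_1$ together with \eqref{tech 1} and Lemma~\ref{Vanishing} complete the argument mechanically.
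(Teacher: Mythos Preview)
Your proof is correct, and in fact your treatment of the $L^{3/2,\infty}$ part is cleaner than the paper's. The paper handles $w_2$ by a height-truncation $w_{2,\delta}=w_2\mathbbm{1}_{|w_2|\ge\delta}$: the small part contributes $\delta\lambda^2$, while $w_{2,\delta}\in L^q$ for $q<3/2$ is paired with $\|u_n\|_{L^{4q/(2q-1)}}^4$ via classical Young--H\"older, and one then lets $\delta\to0$. Your single Lorentz--Young estimate $\|w_2\ast|u_n|^2\|_{L^3}\lesssim\|w_2\|_{L^{3/2,\infty}}\|u_n\|_{L^3}^2$ bypasses this truncation entirely and lands directly on $\|u_n\|_{L^3}^4$. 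For $w_1$ both arguments are essentially the same idea (cut off where $w_1$ is small, use finite measure of the remaining set), but you organize it as a spatial cutoff and then recycle the $w_2$ bound for the compactly supported remainder, whereas the paper does a height cutoff and a direct H\"older estimate on the finite-measure superlevel set. One minor remark: the convergence $\|u_n\|_{L^3}\to0$ is already part of the hypothesis of the lemma, so you need not invoke Lemma~\ref{Vanishing} for it.
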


\begin{proof}
    For $\delta>0$, let $w_{j,\delta}=w\mathbbm{1}_{|w_j|\ge\delta}$ $j=1,2$. Notice that the set $\Omega_\delta=\{x\in\mathbb{R}^3\;:\;|w_1(x)|>\delta\}$ has finite Lebesgue measure $\omega_\delta$ for every $\delta$ since $w_1(x)\rightarrow0$ as $|x|\rightarrow\infty$. Then,
    \begin{equation*}
        \begin{split}
            \left|\iint_{\mathbb{R}^3\times\mathbb{R}^3}|u_n(x)|^2w_1(x-y)|u_n(y)|^2\,\mathrm{d}x\mathrm{d}y \right|&\le \delta\lambda^2+\iint_{|w_1(x-y)|>\delta}|u_n(x)|^2|w_1(x-y)||u_n(y)|^2\,\mathrm{d}x\mathrm{d}y\\
            &=\delta\lambda^2+\int_{\mathbb{R}^3}|u_n(x)|^2\int_{\Omega_\delta}|w_1(z)||u_n(x-z)|^2\,dz\mathrm{d}x\\
            &=\delta\lambda^2+\int_{\mathbb{R}^3}|u_n(x)|^2\|w_1\|_{L^\infty}\|u_n\|^2_{L^2(\Omega_\delta)}\,\mathrm{d}x\\
            &\le\delta\lambda^2+\lambda\|w_1\|_{L^\infty}\|u_n\|^2_{L^2(\Omega_\delta)}\le\delta\lambda^2+\lambda\omega_\delta^{1/6}\|w_1\|_{L^\infty}\|u_n\|^2_{L^3(\Omega_\delta)}\\
            &\le\delta\lambda^2+\lambda\omega_\delta^{1/6}\|w_1\|_{L^\infty}\|u_n\|^2_{L^3(\mathbb{R}^3)}\xrightarrow{n\rightarrow\infty}\delta\lambda^2
        \end{split}
    \end{equation*}
since $L^3(\Omega_\delta)\subset L^2(\Omega_\delta)$ continuously. Similarly, by Hölder inequality, for every $1\le q<3/2$,
\begin{equation*}
    \begin{split}
        \left|\iint_{\mathbb{R}^3\times\mathbb{R}^3}|u_n(x)|^2w_2(x-y)|u_n(y)|^2\,\mathrm{d}x\mathrm{d}y \right|&\le \delta\lambda^2+\iint_{\mathbb{R}^3\times\mathbb{R}^3}|u_n(x)|^2|w_{2,\delta}(x-y)||u_n(y)|^2\mathrm{d}x\mathrm{d}y\\
        &\le\delta\lambda^2+\|u_n\|^4_{L^\frac{4q}{2q-1}}\|w_{2,\delta}\|_{L^q}\xrightarrow{n\rightarrow\infty}\delta\lambda^2,
    \end{split}
\end{equation*}
    where we have used that $w_{2,\delta}\in L^q$ for every $1\le q<3/2$ and that for such $q$ we have $3<\frac{4q}{2q-1}\le 4$, so $u_n\in H^1(\mathbb{R}^3)\subset L^\frac{4q}{2q-1}(\mathbb{R}^3)$. Putting it all together, we get
    \begin{equation*}
        \lim_{n\rightarrow\infty}\int_{\mathbb{R}^3}\left(w\ast|u_n|^2\right)|u_n|^2\le\delta\lambda^2,
    \end{equation*}
    which is enough for us to conclude by arbitrariness of $\delta$.
\end{proof}

We thus define the \textit{minimal vanishing energy}
\begin{equation}
    I^\text{van}(\lambda)=\inf_{u\in\mathcal{S}_\lambda}\mathcal{E}^\text{van}(u)=0,
\end{equation}
so that a minimizing sequence $(u_n)_{n\in\mathbb{N}}\subset \mathcal{S}_\lambda$ can vanish only if $I(\lambda)=I^\text{van}(\lambda)$.

To characterize dichotomy, we exploit \cite[Lemma 6 and Theorem 20]{Lewin} to get the following

\begin{thm}[Characterization of dichotomy]\label{Extracting of a single bubble}
    Let $(u_n)_{n\in\mathbb{N}}$ be a bounded sequence in $H^1(\mathbb{R}^d)$. Then there exists $u^{(1)}\in H^1(\mathbb{R}^d)$ such that for any fixed sequence $0\le R_k\xrightarrow{k\rightarrow\infty}\infty$, there exist a subsequence $(u_{n_k})_{k\in\mathbb{N}}$, sequences of functions $(u_k^{(1)})_{k\in\mathbb{N}}$, $(\psi_k^{(2)})_{k\in\mathbb{N}}$ in $H^1(\mathbb{R}^d)$ and space translations $(x_k^{(1)})_{k\in\mathbb{N}}$ in $\mathbb{R}^d$, such that 
    \begin{equation}\label{convergence in H1}
        \displaystyle\lim_{k\rightarrow\infty}\left\|u_{n_k}- u_k^{(1)}(\cdot-x_k^{(1)})-\psi_k^{(2)}\right\|_{H^1(\mathbb{R}^d)}=0
    \end{equation}
    and such that $u_k^{(1)}$ converges to $u^{(1)}$ weakly in $H^1$ and strongly in $L^p$ for all $2\le p<6$, $\mathrm{supp}(u_k^{(1)})\subset B_{R_k}(0)$ and $\mathrm{supp}(\psi_k^{(2)})\subset\mathbb{R}^d\backslash B_{2R_k}(x_k^{(1)})$ for all $k$.
    
    Moreover,
    \begin{equation}\label{proportionality}
    \begin{split}
        \left\|u_k^{(1)}\right\|_{L^2}\le \|u_{n_k}\|_{L^2}&\text{ and }\;\left\|\psi_k^{(2)}\right\|_{L^2}\le \|u_{n_k}\|_{L^2};\\
        \left\|u_k^{(1)}\right\|_{H^1}\lesssim \|u_{n_k}\|_{H^1}&\text{ and }\;\left\|\psi_k^{(2)}\right\|_{H^1}\lesssim \|u_{n_k}\|_{H^1}.
    \end{split}
\end{equation}
\end{thm}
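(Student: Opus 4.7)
My approach separates the extraction of the profile $u^{(1)}$ (depending only on the original sequence) from the subsequent localization at the prescribed scale $R_k$, and applies a diagonal extraction to render the transition annulus asymptotically trivial.

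For the profile, I introduce the Lévy concentration function $Q_n(R) = \sup_{y \in \mathbb{R}^d} \int_{B_R(y)} |u_n|^2$. If $\lim_{R \to \infty} \limsup_n Q_n(R) = 0$, Lemma \ref{Vanishing} yields $u_n \to 0$ in $L^p$ for $2 < p < 6$ along a subsequence and I set $u^{(1)} \equiv 0$, $x_k^{(1)} = 0$. Otherwise, a subsequence together with translations $y_n$ satisfies $\int_{B_{R_0}(y_n)} |u_n|^2 \ge \alpha > 0$ for some fixed $R_0, \alpha$; weak compactness in $H^1$ and Rellich--Kondrachov then give a further subsequence along which $u_n(\cdot + y_n) \rightharpoonup u^{(1)} \ne 0$ weakly in $H^1$ and strongly in $L^p_{\text{loc}}$ for $2 \le p < 6$, and I take $x_k^{(1)} = y_{n_k}$. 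For the localization, fix $\chi_1, \chi_2 \in C_c^\infty(\mathbb{R}^d)$ with $0 \le \chi_j \le 1$, $\chi_1 \equiv 1$ on $B_{1/2}$ and $\mathrm{supp}\,\chi_1 \subset B_1$, and $\chi_2 \equiv 0$ on $B_2$, $\chi_2 \equiv 1$ outside $B_3$, and set
\[
u_k^{(1)}(x) = \chi_1\!\left(\tfrac{x}{R_k}\right) u_{n_k}(x + x_k^{(1)}), \qquad \psi_k^{(2)}(x) = \chi_2\!\left(\tfrac{x - x_k^{(1)}}{R_k}\right) u_{n_k}(x).
\]
The support conditions are then immediate; the bounds \eqref{proportionality} follow from $|\chi_j| \le 1$ together with $\|\nabla(\chi_j(\cdot/R_k)\, v)\|_{L^2} \le \|\nabla v\|_{L^2} + R_k^{-1} \|\nabla \chi_j\|_\infty \|v\|_{L^2}$ (the extra term being an $O(R_k^{-1})$ correction); and $u_k^{(1)} \rightharpoonup u^{(1)}$ weakly in $H^1$ with strong $L^p_{\text{loc}}$ convergence follows by combining the convergences in the profile step with $\chi_1(\cdot/R_k) \to 1$ pointwise.

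The main obstacle is \eqref{convergence in H1}: since $u_{n_k} - u_k^{(1)}(\cdot - x_k^{(1)}) - \psi_k^{(2)}$ is identically zero off the transition annulus $A_k = \{R_k/2 \le |x - x_k^{(1)}| \le 3R_k\}$, the whole matter reduces to showing $\|u_{n_k}\|_{H^1(A_k)} \to 0$ along a suitable further subsequence. Heuristically, the bubble $u^{(1)}$ has decaying $H^1$ tails whose contribution on $A_k$ vanishes once $R_k$ is large; any additional mass of $u_{n_k}$ not captured by the weak limit must either stay confined near the centre $x_k^{(1)}$ (absorbed into $u_k^{(1)}$ as $R_k$ grows) or migrate arbitrarily far (absorbed into $\psi_k^{(2)}$). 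A diagonal extraction choosing $n_k$ sufficiently large with respect to $R_k$ separates these two regimes so as to leave no residual mass in $A_k$; this is precisely the content of \cite[Lemma~6]{Lewin} and is the step I expect to require the most care, as it is the place where the concentration-compactness apparatus is genuinely used.
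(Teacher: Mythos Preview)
The paper does not supply its own proof of this theorem but simply quotes it from \cite[Lemma~6 and Theorem~20]{Lewin}. Your sketch is precisely the construction carried out in that reference---profile extraction via the L\'evy concentration function, localization by smooth cutoffs at scale $R_k$, and a diagonal choice of $n_k$ to make the transition annulus negligible in $H^1$---and you correctly defer the one genuinely delicate step to the same lemma, so there is nothing to compare.
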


\begin{rem}
    Theorem \ref{Extracting of a single bubble} gives a a general property of bounded sequences in $H^1$; indeed, it remains true even if dichotomy in the sense of the Concentration-Compactness Lemma does not occur. For a sequence $(u_n)_{n\in\mathbb{N}}$ bounded in $H^1$ with fixed mass $\|u_n\|^2_{L^2}=\lambda$,  dichotomy in the sense of Lemma \ref{conc comp lemma} occurs if and only if  $0<\left\|u^{(1)}\right\|^2_{L^2}<\lambda$.
\end{rem}

\section{Proof of Theorem \ref{main theorem}}

\subsection{Existence of the minimizer}
In this section, we discuss the existence of a minimizer for the Hartree energy functional \eqref{Hartree Energy}, as stated in Theorem \ref{main theorem}. We mainly rely on the concentration-compactness principle \cite{Lions} along with some ideas from \cite{Br Fa Pa 1} and \cite{Lewin}.

We start with a lemma showing the basic properties of the Hartree functional.
\begin{lem}\label{well posedness}
    Let $w\in L^\infty(\mathbb{R}^3)+L^{3/2,\infty}(\mathbb{R}^3)$. Then $\mathcal{E}$ is well defined, translation invariant, continuous on $H^1(\mathbb{R}^3)$ and both bounded from below and coercive on $\mathcal{S}_{\le\lambda}$ for all $0\le\lambda<\lambda^*$, where we defined $\mathcal{S}_{\le\lambda}=\{u\in H^1(\mathbb{R}^3)\,:\,\|u\|^2_{L^2}\le\lambda\}$ and by coercive we mean that there exist $C\in\mathbb{R}$ and $\delta>0$ such that 
    for every $u\in\mathcal{S}_{\le\lambda}$
    \begin{equation*}
        \mathcal{E}(u)\ge C\lambda^2+\delta\|u\|^2_{\Dot{H}^1}.
    \end{equation*}
    
\end{lem}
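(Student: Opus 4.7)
The plan is to handle the four claims in the order they appear, treating the well-definedness and translation invariance as warm-up, continuity via a multilinear argument, and then the key point, coercivity/boundedness from below, by an optimal choice of the decomposition $w=w_1+w_2$.

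First I would fix any decomposition $w=w_1+w_2$ with $w_1\in L^\infty$ and $w_2\in L^{3/2,\infty}$ and use the technical estimates of Lemma \ref{Technical Lemma} to bound the interaction term
\begin{equation*}
 \Bigl|\iint|u(x)|^2 w(x-y)|u(y)|^2\,\mathrm{d}x\mathrm{d}y\Bigr|\le \|w_1\|_{L^\infty}\|u\|_{L^2}^4+C\|w_2\|_{L^{3/2,\infty}}\|u\|_{L^2}^2\|u\|_{\dot H^1}^2,
\end{equation*}
which shows $\mathcal{E}$ is well-defined on $H^1(\mathbb{R}^3)$. Translation invariance is immediate from the fact that the kinetic term commutes with translations and the double integral depends only on $x-y$.

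For continuity on $H^1$, the idea is to polarize the quartic interaction as a $4$-linear form
$Q(u_1,u_2,u_3,u_4)=\int(w\ast(u_1\bar u_2))u_3\bar u_4$, write the difference
$Q(u_n,u_n,u_n,u_n)-Q(u,u,u,u)$ as a telescoping sum of terms each containing a factor $u_n-u$, and bound each term. For the $L^\infty$ part I would use Young $\|w_1\ast f\|_{L^\infty}\le\|w_1\|_{L^\infty}\|f\|_{L^1}$ followed by Cauchy--Schwarz, yielding control by four $L^2$ norms; for the $L^{3/2,\infty}$ part I would use \eqref{tech 2} followed by Cauchy--Schwarz, yielding control by two $\dot H^1$ and two $L^2$ norms. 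In both cases the $4$-linear form is continuous on $H^1$, so $u\mapsto Q(u,u,u,u)$ is continuous, and the kinetic term is obviously continuous.

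For the coercivity, fix $\lambda<\lambda^*=1/(C_2K)$ and choose, by the definition \eqref{def c2} of $C_2$, a decomposition $w=w_1+w_2$ with $\|w_2\|_{L^{3/2,\infty}}$ close enough to $C_2$ that $K\|w_2\|_{L^{3/2,\infty}}\lambda<2$ (possible since $KC_2\lambda<1$). For any $u\in\mathcal{S}_{\le\lambda}$, applying the trivial bound on the $L^\infty$ part and the very definition \eqref{def k} of $K$ on the $L^{3/2,\infty}$ part gives
\begin{equation*}
\mathcal{E}(u)\ge\frac{1}{2}\|u\|_{\dot H^1}^2-\frac{\|w_1\|_{L^\infty}}{4}\lambda^2-\frac{K\|w_2\|_{L^{3/2,\infty}}\lambda}{4}\|u\|_{\dot H^1}^2=\delta\|u\|_{\dot H^1}^2+C\lambda^2,
\end{equation*}
with $\delta=\tfrac{1}{4}(2-K\|w_2\|_{L^{3/2,\infty}}\lambda)>0$ and $C=-\|w_1\|_{L^\infty}/4$, which gives both boundedness from below and the coercivity statement.

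The only delicate point is the coercivity step, where one must use the infimum structure of $C_2$: a single decomposition does not suffice, one must select one for which $K\|w_2\|_{L^{3/2,\infty}}\lambda$ is strictly below $2$, and this is precisely what $\lambda<\lambda^*$ permits. Everything else reduces to the Lorentz-space Hölder/Young machinery collected in Section \ref{2.1}.
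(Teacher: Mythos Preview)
Your proposal is correct and follows essentially the same approach as the paper: well-definedness via \eqref{tech 1}--\eqref{tech 2}, continuity by expanding the difference of the quartic term and bounding each piece with the same Lorentz-space inequalities, and coercivity by choosing a near-optimal splitting $w=w_1+w_2$ and invoking the definition of $K$. Your coercivity computation is in fact slightly cleaner than the paper's, which unnecessarily weakens $-\tfrac14|\int(w\ast|u|^2)|u|^2|$ to $-\tfrac12|\cdot|$ before estimating; you keep the $\tfrac14$ and therefore only need $K\|w_2\|_{L^{3/2,\infty}}\lambda<2$ rather than $<1$, but since $KC_2\lambda<1$ both thresholds are reachable and the arguments are equivalent.
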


\begin{proof}
    First of all, $\int_{\mathbb{R}^3}|\nabla u|^2$ is finite for every $u\in H^1$; then, by the technical inequalities \eqref{tech 1} and \eqref{tech 2} we have
\begin{equation}\label{est on convo}
    \left|\int_{\mathbb{R}^3}\left(w\ast|u|^2\right)|u|^2\right|\lesssim\left(\|w_1\|_{L^\infty}\|u\|^2_{L^2}+\|w_2\|_{L^{3/2,\infty}}\|u\|^2_{\Dot{H^1}}\right)\|u\|^2_{L^2},
\end{equation}
which allows us to conclude that $\mathcal{E}$ is well defined on $H^1(\mathbb{R}^3)$.

To prove that $\mathcal{E}$ is continuous from $H^1$ to $\mathbb{R}$, it is sufficient to show that $u\mapsto\int_{\mathbb{R}^3}\left(w\ast |u|^2\right)|u|^2$ is continuous from $H^1$ to $\mathbb{R}$: let $u,\Tilde{u}\in H^1$ then,
\begin{equation*}
    \begin{split}
        \left|\int_{\mathbb{R}^3}\left(w\ast|\Tilde{u}|^2\right)|\Tilde{u}|^2-\int_{\mathbb{R}^3}\left(w\ast|u|^2\right)|u|^2\right|&=\left|\int_{\mathbb{R}^3}\left(w\ast\left(|\Tilde{u}|^2-|u|^2\right)\right)|\Tilde{u}|^2+\int_{\mathbb{R}^3}\left(w\ast|u|^2\right)\left(|\Tilde{u}|^2-|u|^2\right)\right|\\
        &\lesssim\left\|\left(w_1\ast\left(|\Tilde{u}|^2-|u|^2\right)\right)|\Tilde{u}|^2\right\|_{L^1}+\left\|\left(w_2\ast\left(|\Tilde{u}|^2-|u|^2\right)\right)|\Tilde{u}|^2\right\|_{L^1}\\
        &+\left\|\left(w_1\ast|u|^2\right)\left(|\Tilde{u}|^2-|u|^2\right)\right\|_{L^1}+\left\|\left(w_2\ast|u|^2\right)\left(|\Tilde{u}|^2-|u|^2\right)\right\|_{L^1}.
    \end{split}
\end{equation*}
We handle the third and fourth term by applying respectively the technical inequalities \eqref{tech 1} and \eqref{tech 2}:
\begin{equation*}
    \left\|\left(w_1\ast|u|^2\right)\left(|\Tilde{u}|^2-|u|^2\right)\right\|_{L^1}\lesssim\|w_1\|_{L^\infty}\|u\|^2_{L^2}\left\|\left(|\Tilde{u}|^2-|u|^2\right)\right\|_{L^1}
\end{equation*}
and
\begin{equation*}
    \left\|\left(w_2\ast|u|^2\right)\left(|\Tilde{u}|^2-|u|^2\right)\right\|_{L^1}\lesssim\|w_2\|_{L^{3/2,\infty}}\|u\|^2_{\Dot{H}^1}\left\|\left(|\Tilde{u}|^2-|u|^2\right)\right\|_{L^1}.
\end{equation*}
We handle the first two in the same way, first noticing that 
\begin{equation*}
    \int_{\mathbb{R}^3}\left(w\ast\left(|\Tilde{u}|^2-|u|^2\right)\right)|\Tilde{u}|^2=\int_{\mathbb{R}^3}\left(w\ast|\Tilde{u}|^2\right)\left(w\ast\left(|\Tilde{u}|^2-|u|^2\right)\right).
\end{equation*}
Putting all four terms together, we get
\begin{equation*}
    \begin{split}
        \left|\int_{\mathbb{R}^3}(w\ast|\Tilde{u}|^2)|\Tilde{u}|^2-\int_{\mathbb{R}^3}\left(w\ast|u|^2\right)|u|^2\right|&\lesssim(\|w_1\|_{L^\infty}+\|w_2\|_{L^{3/2,\infty}})(\|u\|^2_{H^1}+\|\Tilde{u}\|^2_{H^1})\left\||\Tilde{u}|^2-|u|^2\right\|_{L^1}\\
        &\lesssim(\|w_1\|_{L^\infty}+\|w_2\|_{L^{3/2,\infty}})(\|u\|^2_{H^1}+\|\Tilde{u}\|^2_{H^1})\|\Tilde{u}+u\|_{L^2}\|\Tilde{u}-u\|_{L^2},
    \end{split}
\end{equation*}
which proves continuity.

Lastly, we prove coercivity and boundedness from below on $\mathcal{S}_{\le\lambda}$ for any $0<\lambda<\lambda^*$: we write $\lambda=\lambda^*(1-\delta)$ for some $\delta\in(0,1)$; then, by definition of $C_2$ we can choose a splitting $w=w_1+w_2$ such that $\|w_2\|_{L^{3/2,\infty}}\le C_2\left(1+\frac{\delta}{2-2\delta}\right)=C_2\frac{2-\delta}{2-2\delta}$, so that for any $u\in\mathcal{S}_{\le\lambda}$ we have
\begin{equation*}
    \begin{split}
        \left|\int_{\mathbb{R}^3}\left(w\ast|u|^2\right)|u|^2\right|&\le\lambda^2\|w_1\|_{L^\infty}+K\lambda\|w_2\|_{L^{3/2,\infty}}\|u\|^2_{\Dot{H^1}}\le\lambda^2\|w_1\|_{L^\infty}+KC_2\frac{2-\delta}{2-2\delta}\lambda^*(1-\delta)\|u\|^2_{\Dot{H^1}}\\
        &<\lambda^2\|w_1\|_{L^\infty}+\left(1-\frac{\delta}{2}\right)\|u\|^2_{\Dot{H^1}}
    \end{split}
\end{equation*}
by the technical inequalities \eqref{tech 1}, \eqref{tech 2} and the definition of $K$. This, in turn, implies that

\begin{equation*}
    \begin{split}
        \mathcal{E}(u)&\ge\frac{1}{2}\|u\|^2_{\Dot{H^1}}-\frac{1}{4}\left|\int_{\mathbb{R}^3}\left(w\ast|u|^2\right)|u|^2\right|\ge\frac{1}{2}\|u\|^2_{\Dot{H^1}}-\frac{1}{2}\left|\int_{\mathbb{R}^3}\left(w\ast|u|^2\right)|u|^2\right|\\
        &>\frac{1}{2}\|u\|^2_{\Dot{H^1}}-\frac{1}{2}\left(\|u\|^4_{L^2}\|w_1\|_{L^\infty}+\left(1-\frac{\delta}{2}\right)\|u\|^2_{\Dot{H^1}}\right)=-\frac{\lambda^2}{2}\|w_1\|_{L^\infty}+\frac{\delta}{4}\|u\|^2_{\Dot{H^1}}
    \end{split}
\end{equation*}
so $\mathcal{E}$ is coercive and semibounded from below on $\mathcal{S}_{\le\lambda}$.
\end{proof}

\begin{rem}\label{remark continuity}
    Looking carefully at the proof of continuity of $\mathcal{E}$ w.r.t.~$H^1$ norm, notice that we also proved that $u\mapsto\int_{\mathbb{R}^3}\left(w\ast |u|^2\right)|u|^2$ is uniformly continuous w.r.t.~the $L^2$ norm on all bounded subsets of $H^1$.

    Moreover, $u\mapsto\mathcal{E}(u)$ is uniformly continuous from $H^1$ to $\mathbb{R}$ on bounded subsets of $H^1$.
\end{rem}

From coercivity and local uniform continuity of the energy functional $\mathcal{E}$ follows the lower semicontinuity of its minimal energy $I$; more precisely,

\begin{lem}\label{lsc of gs energy lemma}
    Let $\lambda>0$ and $(\lambda_k)_{k\in\mathbb{N}}$ such that $\lambda_k\xrightarrow{k\rightarrow\infty}\lambda$. Then
    \begin{equation}\label{lsc of gs energy}
        \liminf_{k\rightarrow\infty}I(\lambda_k)\ge I(\lambda).
    \end{equation}
\end{lem}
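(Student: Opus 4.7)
My plan is to combine a rescaling argument with the local uniform continuity of $\mathcal{E}$ recorded in Remark~\ref{remark continuity}. First I would restrict attention to $\lambda<\lambda^*$, since otherwise $I(\lambda)=-\infty$ and the inequality is trivial; after extracting a subsequence I may assume $I(\lambda_k)\to\liminf_{k\to\infty}I(\lambda_k)=:L$ and that $\lambda_k$ eventually lies in an interval $[0,\Lambda]$ with $\Lambda\in(\lambda,\lambda^*)$.

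For each $k$ I would pick a near-minimizer $u_k\in\mathcal{S}_{\lambda_k}$ with $\mathcal{E}(u_k)\le I(\lambda_k)+1/k$. Since $u_k\in\mathcal{S}_{\le\Lambda}$ for large $k$, the coercivity estimate of Lemma~\ref{well posedness} yields a uniform $H^1$-bound on the sequence $(u_k)$.

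Next I would transport the $u_k$ onto $\mathcal{S}_\lambda$ via the mass-preserving rescaling $v_k:=\sqrt{\lambda/\lambda_k}\,u_k$ and compare the two energies through the identity
\begin{equation*}
    \mathcal{E}(v_k)-\mathcal{E}(u_k)=\frac{1}{2}\!\left(\frac{\lambda}{\lambda_k}-1\right)\|\nabla u_k\|_{L^2}^2+\frac{1}{4}\!\left(\frac{\lambda^2}{\lambda_k^2}-1\right)\!\int_{\mathbb{R}^3}(w\ast|u_k|^2)|u_k|^2.
\end{equation*}
Both bracketed prefactors tend to $0$, while $\|\nabla u_k\|_{L^2}^2$ is bounded by the previous step and $\bigl|\int(w\ast|u_k|^2)|u_k|^2\bigr|$ is bounded thanks to the estimate \eqref{est on convo}. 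Hence $\mathcal{E}(v_k)-\mathcal{E}(u_k)\to 0$, and since $v_k\in\mathcal{S}_\lambda$ I would conclude $I(\lambda)\le\mathcal{E}(v_k)=\mathcal{E}(u_k)+o(1)\le I(\lambda_k)+1/k+o(1)\to L$, which is the desired inequality.

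The only genuinely delicate step is the preliminary uniform $H^1$-bound on the near-minimizers: without the coercivity provided by Lemma~\ref{well posedness} (and therefore without the subcritical condition $\lambda<\lambda^*$) one would lose control of $\|\nabla u_k\|_{L^2}$ in the rescaling formula, and the passage from $u_k$ to $v_k$ could introduce an error of uncontrolled size. Consequently, the subcritical restriction is what makes the rescaling trick viable.
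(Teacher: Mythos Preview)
Your argument is correct and actually streamlines the paper's own proof. The paper treats the two one-sided limits separately: for $\lambda_k=\lambda-\varepsilon$ it chooses a compactly supported near-minimizer in $\mathcal{S}_{\lambda-\varepsilon}$ and then \emph{adds} a small bump $v_\varepsilon\in\mathcal{S}_\varepsilon$ with disjoint support to land in $\mathcal{S}_\lambda$, invoking the uniform continuity of $\mathcal{E}$ on $H^1$-bounded sets; only for $\lambda_k=\lambda+\varepsilon$ does it use the multiplicative rescaling $\tilde u_\varepsilon=\sqrt{\lambda/(\lambda+\varepsilon)}\,u_\varepsilon$. You observe that the rescaling $v_k=\sqrt{\lambda/\lambda_k}\,u_k$ works equally well in both directions once the near-minimizers are uniformly bounded in $H^1$, and your explicit identity for $\mathcal{E}(v_k)-\mathcal{E}(u_k)$ makes the $o(1)$ error transparent without needing the disjoint-support construction. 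Both proofs hinge on the same ingredient you single out---coercivity on $\mathcal{S}_{\le\Lambda}$ with $\Lambda<\lambda^*$---so the scope is identical; your route is simply more economical. One small caveat: your opening clause ``otherwise $I(\lambda)=-\infty$'' is not established in the paper for $\lambda\ge\lambda^*$, but since the lemma is only invoked in the subcritical range this does not affect the argument.
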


\begin{proof}
    For $\varepsilon>0$ small, let $u_\varepsilon\in \mathcal{S}_{\lambda-\varepsilon}\cap C^\infty_c$ such that 
    \begin{equation*}
        I(\lambda-\varepsilon)\le\mathcal{E}(u_\varepsilon)\le I(\lambda-\varepsilon)+\varepsilon
    \end{equation*}
    and let $v_\varepsilon\in\mathcal{S}_\varepsilon\cap C^\infty_c$ such that $\|v_\varepsilon\|^2_{\Dot{H}^1}\le\varepsilon$ and $\mathrm{supp}\,u_\varepsilon\,\cap\mathrm{supp}\,v_\varepsilon=\emptyset$, so that $u_\varepsilon+v_\varepsilon\in\mathcal{S}_\lambda$. Since $\mathcal{E}$ is coercive  on $\mathcal{S}_{\le\lambda}$, $(u_\varepsilon)_\varepsilon$ is uniformly bounded in $H^1$, while by its definition so is $(v_\varepsilon)_\varepsilon$. Uniform continuity of $\mathcal{E}$ w.r.t.~the $H^1$ norm on bounded subsets of $H^1$ implies that $\mathcal{E}(u_\varepsilon+v_\varepsilon)=\mathcal{E}(u_\varepsilon)+o_\varepsilon(1)$, so
    \begin{equation*}
        I(\lambda)\le\mathcal{E}(u_\varepsilon+v_\varepsilon)\le\mathcal{E}(u_\varepsilon)+o_\varepsilon(1)\le I(\lambda-\varepsilon)+o_\varepsilon(1)+\varepsilon.
    \end{equation*}
    Passing to the liminf, we obtain $I(\lambda)\le\liminf_{\varepsilon\rightarrow0^+}I(\lambda-\varepsilon)$.
    
    To get the inequality from above, we proceed in a similar way: for $\varepsilon>0$ small, let $u_\varepsilon\in\mathcal{S}_{\lambda+\varepsilon}\cap C^\infty_c$ such that
    \begin{equation*}
        I(\lambda+\varepsilon)\le\mathcal{E}(u_\varepsilon)\le I(\lambda+\varepsilon)+\varepsilon
    \end{equation*}
    and let $\Tilde{u}_\varepsilon=\sqrt{\tfrac{\lambda}{\lambda+\varepsilon}}u_\varepsilon\in\mathcal{S}_\lambda$. Then, letting $v_\varepsilon=u_\varepsilon-\Tilde{u}_\varepsilon$,
    \begin{equation*}
        \|v_\varepsilon\|^2_{L^2}=\left(1-\sqrt{\frac{\lambda}{\lambda+\varepsilon}}\right)\|u_\varepsilon\|^2_{L^2}=\left(\sqrt{\lambda+\varepsilon}-\sqrt{\lambda}\right)^2\le\frac{\varepsilon^2}{4\lambda}
    \end{equation*}
    and
    \begin{equation}
        \|v_\varepsilon\|^2_{\Dot{H}^1}=\left(1-\sqrt{\frac{\lambda}{\lambda+\varepsilon}}\right)\|u_\varepsilon\|^2_{\Dot{H}^1}=\left(\frac{\sqrt{\lambda+\varepsilon}-\sqrt{\lambda}}{\sqrt{\lambda+\varepsilon}}\right)^2\|u_\varepsilon\|^2_{\Dot{H}^1}\le\frac{\varepsilon^2}{4\lambda^2}\|u_\varepsilon\|^2_{\Dot{H}^1}.
    \end{equation}
    
    Once again, since $\mathcal{E}$ is coercive  on $\mathcal{S}_{\le\lambda}$, both $(u_\varepsilon)_\varepsilon$ and $(v_\varepsilon)_\varepsilon$ are uniformly bounded in $H^1$, so the uniform continuity of $\mathcal{E}$ w.r.t.~the $H^1$ norm on bounded subsets of $H^1$ implies that $\mathcal{E}(\Tilde{u}_\varepsilon-v_\varepsilon)=\mathcal{E}(\Tilde{u}_\varepsilon)+o_\varepsilon(1)$, so
    \begin{equation*}
        I(\lambda)\le\mathcal{E}(u_\varepsilon-v_\varepsilon)\le\mathcal{E}(u_\varepsilon)+o_\varepsilon(1)\le I(\lambda+\varepsilon)+o_\varepsilon(1)+\varepsilon.
    \end{equation*}
    Passing to the liminf, we obtain $I(\lambda)\le\liminf_{\varepsilon\rightarrow0^+}I(\lambda+\varepsilon)$.
    
    Putting the two inequalities together yields \eqref{lsc of gs energy}.
\end{proof}

We are now ready to prove Theorem \ref{main theorem}: let $0\not\equiv w\in L^\infty(\mathbb{R}^3)+L^{3/2,\infty}(\mathbb{R}^3)$ such that $w_1(x)\xrightarrow{|x|\rightarrow\infty}0$; since $\mathcal{E}$ is coercive and continuous on $\mathcal{S}_{\le \lambda}$, every minimizing sequence $(u_n)_{n\in\mathbb{N}}\subset \mathcal{S}_\lambda$ of \eqref{definition minimizer} is bounded in $H^1(\mathbb{R}^3)$, so up to a subsequence there exists $u_*\in H^1(\mathbb{R}^3)$ such that $u_n\rightharpoonup u_*$ weakly in $H^1$. We prove that, up to a subsequence, the convergence is also strong in $L^2(\mathbb{R}^3)$.

Applying the Concentration Compactness Principle with our characterization of vanishing, coming from Lemmas \ref{Vanishing} and \ref{int vanishes in subcrit Lp}, and dichotomy, coming from Theorem \ref{Extracting of a single bubble}, to a minimizing sequence $(u_n)_{n\in\mathbb{N}}$ of \eqref{definition minimizer} yields the existence of a subsequence $(u_{n_k})_{k\in\mathbb{N}}$ such that one of the following occurs:
\begin{enumerate}
    \item (Compactness) There exists $(x_k)_{k\in\mathbb{N}}\subset\mathbb{R}^3$ such that $u_{n_k}(\cdot+x_k)\rightarrow u_*$ strongly in $L^2$;
    \item (Vanishing) $I(\lambda)=\displaystyle\lim_{k\rightarrow\infty}\mathcal{E}(u_{n_k})=\lim_{k\rightarrow\infty}\mathcal{E}^{\text{van}}(u_{n_k})=I^\text{van}(\lambda)=0$;
    \item (Dichotomy) There exist $(x_k^{(1)})_{k\in\mathbb{N}}\subset\mathbb{R}^3$, $(u_k^{(1)})_{k\in\mathbb{N}}\subset H^1$, $(\psi_k^{(2)})_{k\in\mathbb{N}}\subset H^1$ and $u^{(1)}\in H^1$ with $0<\|u^{(1)}\|^2_{L^2}<\lambda$ such that 
    \begin{equation*}
        \displaystyle\lim_{k\rightarrow\infty}\left\|u_{n_k}- u_k^{(1)}(\cdot-x_k^{(1)})-\psi_k^{(2)}\right\|_{H^1(\mathbb{R}^d)}=0
    \end{equation*}
    and such that $u_k^{(1)}$ converges to $ u^{(1)}$ weakly in $H^1$ and strongly in $L^p$, $2\le p<6$.
\end{enumerate}
We split the proof of Theorem \ref{main theorem} into four separate claims.

\textbf{CLAIM 1}: Vanishing does not occur for $\lambda>\lambda_*$.
\begin{proof}
In particular, we prove that there exists $\lambda>0$ such that
\begin{equation}\label{no vanishing}
    I(\lambda)<I^\text{van}(\lambda)=0.
\end{equation}
Let $u\in\mathcal{S}_1$ such that $\int_{\mathbb{R}^3}\left(w\ast|u|^2\right)|u|^2<0$; for $\theta>0$ we have that $\theta u\in\mathcal{S}_{\theta^2}$ and
\begin{equation*}
    \mathcal{E}(\theta u)=\frac{\theta^2}{2}\int_{\mathbb{R}^3}|\nabla u|^2+\frac{\theta^4}{4}\int_{\mathbb{R}^3}\left(w\ast|u|^2\right)|u|^2<0\;\text{for}\;\theta\gg1,
\end{equation*}
which shows that there exists $\lambda$ such that \eqref{no vanishing} holds. This proves that vanishing does not occur, and in particular $\lambda_*<\infty$. Notice that this argument we also proves that $I(\lambda)<0$ for every $\lambda>\lambda^*$.
\end{proof}
\begin{rem}
    As anticipated in Remark \ref{rem about lambda_*} if $w(x)=-\frac{1}{|x|^\alpha}$, $0<\alpha<2$, we can prove that $I(\lambda)<0$ for every $\lambda>0$ following the same scaling argument as in Remark \ref{first scaling argument}: for $\sigma>0$, letting $u_\sigma(x)=\sigma^{-3/2}u\left(\frac{x}{\sigma}\right)$, we have
\begin{equation}\label{scaling}
    \mathcal{E}(u_\sigma)=\frac{1}{2\sigma^2}\int_{\mathbb{R}^3}|\nabla u|^2-\frac{1}{4\sigma^\alpha}\int_{\mathbb{R}^3}\left(\frac{1}{|x|^\alpha}\ast|u|^2\right)|u|^2,
\end{equation}
so $\mathcal{E}(u_\sigma)<0$ for $\sigma\gg1$; this proves that for this potential vanishing does not occur for every $\lambda>\lambda_*=0$.
\end{rem}
\textbf{CLAIM 2}: For $\lambda>\lambda_*$, the \textit{binding inequality}
\begin{equation}\label{binding inequality}
    I(\lambda)<I(\alpha)+I(\lambda-\alpha)\;\text{for every}\;0<\alpha<\lambda
\end{equation}
holds.
\begin{proof}
We start by proving that
\begin{equation}\label{scaling for min energy}
    I(\theta\lambda)<\theta I(\lambda)\;\text{for every}\;\lambda>\lambda_*\text{ and  }\theta>1.
\end{equation}
First,
\begin{equation*}
    I(\theta\lambda)=\inf_{u\in\mathcal{S}_\lambda}\left\{\frac{\theta}{2}\|u\|_{\Dot{H}^1}^2+\frac{\theta^2}{4}\int_{\mathbb{R}^3}\left(w\ast|u|^2\right)|u|^2 \right\}=\theta\inf_{u\in\mathcal{S}_\lambda}\left\{\frac{1}{2}\|u\|_{\Dot{H}^1}^2+\frac{\theta}{4}\int_{\mathbb{R}^3}\left(w\ast|u|^2\right)|u|^2 \right\}.
\end{equation*}
Then, notice that when defining problem $\eqref{definition minimizer}$ we can restrict ourselves to taking the inf over the set 
\begin{equation*}
    \mathcal{S}_{\lambda,\beta}=\left\{u\in\mathcal{S}_\lambda\;:\;\int_{\mathbb{R}^3}\left(w\ast|u|^2\right)|u|^2\le-\beta\right\}
\end{equation*}
for some $\beta>0$. Suppose that this is not the case: then, for every minimizing sequence $(v_n)_{n\in\mathbb{N}}\subset\mathcal{S}_\lambda$ we would have $\int_{\mathbb{R}^3}\left(w\ast|v_n|^2\right)|v_n|^2\rightarrow0$. In turn, this would imply that $I(\lambda)=I^\text{van}(\lambda)=0$, which contradicts the assumption $\lambda>\lambda_*$.

To conclude, observe that since $\theta>1$
\begin{equation*}
    I(\theta\lambda)=\frac{\theta}{2}\inf_{u\in\mathcal{S}_{\lambda,\beta}}\left\{\|u\|_{\Dot{H}^1}^2+\frac{\theta}{2}\int_{\mathbb{R}^3}\left(w\ast|u|^2\right)|u|^2 \right\}<\frac{\theta}{2}\inf_{u\in\mathcal{S}_{\lambda,\beta}}\left\{\|u\|_{\Dot{H}^1}^2+\frac{1}{2}\int_{\mathbb{R}^3}\left(w\ast|u|^2\right)|u|^2 \right\}=\theta I(\lambda).
\end{equation*}
We are now ready to prove \eqref{binding inequality}: fix $\lambda>\lambda_*$ and $\alpha\in (0,\lambda)$. Then we must be in one of the following situations (assuming, without loss of generality, that $I(\lambda_*)\le0$ and that $\alpha\ge\lambda-\alpha$):
\begin{enumerate}
    \item $\alpha\in(0,\lambda_*]$ and $\lambda-\alpha\in(0,\lambda_*]$. If this is the case,
    \begin{equation*}
        I(\lambda)<0\le I(\alpha)+I(\lambda-\alpha);
    \end{equation*}
    \item $\alpha\in(\lambda_*,\lambda)$, $\lambda-\alpha\in(0,\lambda_*]$. If this is the case, since $\alpha<\lambda$ and $I(\alpha)<0$, $I(\lambda-\alpha)=0$, by \eqref{scaling for min energy} we have
\begin{equation*}
    I(\lambda)<\frac{\lambda}{\alpha}I(\alpha)<I(\alpha)\le I(\alpha)+I(\lambda-\alpha);
\end{equation*}
    \item $\alpha\in(\lambda_*,\lambda)$ and $\lambda-\alpha\in(\lambda_*,\lambda)$. If this is the case, then by \eqref{scaling for min energy}
    \begin{equation*}
    I(\lambda)<\frac{\lambda}{\alpha}I(\alpha)=I(\alpha)+\frac{\lambda-\alpha}{\alpha}I(\alpha)\le I(\alpha)+I(\lambda-\alpha).
\end{equation*}
\end{enumerate}
\end{proof}
\textbf{CLAIM 3}: Dichotomy does not occur.
\begin{proof}
By Theorem \ref{Extracting of a single bubble}, we know that if we fix a sequence $0\le R_k\xrightarrow{k\rightarrow\infty}\infty$ there exist $u^{(1)}\in H^1(\mathbb{R}^3)$ and
\begin{itemize}
    \item A subsequence $(u_{n_k})_{k\in\mathbb{N}}$,
    \item Sequences of functions $(u^{(1)}_{k})_{k\in\mathbb{N}}$, $(\psi^{(2)}_{k})_{k\in\mathbb{N}}$ in $H^1(\mathbb{R}^3)$,
    \item A sequence of translations $(x_k^{(1)})_{k\in\mathbb{N}}\subset\mathbb{R}^3$
\end{itemize}
such that \eqref{convergence in H1} holds and
\begin{itemize}
    \item $u_k^{(1)}$ converges to $u^{(1)}$ weakly in $H^1(\mathbb{R}^3)$ and strongly in $L^2(\mathbb{R}^3)$,
    \item $\mathrm{supp}(u_k^{(1)})\subset B_{R_k}(0)$ and $\mathrm{supp}(\psi_k^{(2)})\subset\mathbb{R}^3\backslash B_{2R_k}(x_k^{(1)})$.
\end{itemize}
Since our problem is translation invariant, without loss of generality we can choose $x_k^{(1)}\equiv0$.

As $u_k^{(1)}\rightarrow u^{(1)}$ strongly in $L^2$, we have that $\|u_k^{(1)}\|^2_{L^2}\rightarrow\|u^{(1)}\|^2_{L^2}=:\alpha>0$. We remark that $\alpha$ can be assumed non zero because $\alpha=0$ implies vanishing of the minimizing sequence; for more details on why this is true we refer to \cite{Lewin}. This limit, combined with \eqref{convergence in H1} and the reverse triangular inequality, yields $\|\psi_k^{(2)}\|^2_{L^2}\rightarrow\lambda-\alpha$. We also have
\begin{equation}\label{conv energy uk1}
    \liminf_{k\rightarrow\infty}\mathcal{E}(u_k^{(1)})\ge \mathcal{E}(u^{(1)})\ge I(\alpha)
\end{equation}
by Remark \ref{remark continuity} and weak lower semicontinuity of the $L^2$ norm. Similarly, by Lemma \ref{lsc of gs energy lemma} we have
\begin{equation}\label{conv energy psi}
    \liminf_{k\rightarrow\infty}\mathcal{E}(\psi_k^{(2)})\ge \liminf_{k\rightarrow\infty}I(\|\psi_k^{(2)}\|^2_{L^2})\ge I(\lambda-\alpha).
\end{equation}
Now, if we are able to prove that
\begin{equation}\label{split energy}
    \mathcal{E}(u_{n_k})=\mathcal{E}(u_k^{(1)})+\mathcal{E}(\psi_k^{(2)})+o_k(1),
\end{equation}
combining \eqref{conv energy uk1}, \eqref{conv energy psi} and \eqref{split energy} we obtain
\begin{equation*}
    I(\lambda)\ge I(\alpha)+I(\lambda-\alpha),
\end{equation*}
which contradicts the strict energy inequality \eqref{binding inequality} unless $\alpha=\lambda$.

This, together with \eqref{convergence in H1}, implies that $u_{n_k}\rightharpoonup u^{(1)}$ weakly in $L^2$ as $u_k^{(1)}\rightharpoonup u^{(1)}$ weakly in $L^2$ by Theorem \ref{Extracting of a single bubble}. The convergence is also strong in $L^2(\mathbb{R}^3)$ as $\|u_{n_k}\|^2_{L^2}=\lambda=\|u^{(1)}\|^2_{L^2}$, and by uniqueness of the limit $u_*=u^{(1)}$.

To prove \eqref{split energy} we start by noticing that as a consequence of \eqref{convergence in H1} and the continuity of $\mathcal{E}$ we have
\begin{equation*}
    \mathcal{E}(u_{n_k})=\mathcal{E}(u_k^{(1)}+\psi_k^{(2)})+o_k(1).
\end{equation*}
Moreover, since the supports of $u_k^{(1)}$ and $\psi_k^{(2)}$ are disjoint, we have
\begin{equation*}
    \begin{split}
        \mathcal{E}(u_k^{(1)}+\psi_k^{(2)})&=\frac{1}{2}\left\|u_k^{(1)}+\psi_k^{(2)}\right\|^2_{\Dot{H}^1}+\frac{1}{4}\int_{\mathbb{R}^3}(w\ast(|u_k^{(1)}|^2+|\psi_k^{(2)}|^2))(|u_k^{(1)}|^2+|\psi_k^{(2)}|^2)\\
        &=\mathcal{E}(u_k^{(1)})+\mathcal{E}(\psi_k^{(2)})+\frac{1}{4}\int_{\mathbb{R}^3}(w\ast|u_k^{(1)}|^2)|\psi_k^{(2)}|^2+\frac{1}{4}\int_{\mathbb{R}^3}(w\ast|\psi_k^{(2)}|^2)|u_k^{(1)}|^2.
    \end{split}
\end{equation*}
To prove that $\int|u_k^{(1)}(x)|^2w(x-y)|\psi_k^{(2)}(y)|^2\rightarrow0$ as $k\rightarrow\infty$, we proceed in a similar way as in Lemma \ref{int vanishes in subcrit Lp}: defining $w_\delta=w\mathbbm{1}_{|w|\ge\delta}$ for a fixed $\delta>0$, we have
\begin{equation*}
    \begin{split}
        \left|\int_{\mathbb{R}^3}(w\ast|u_k^{(1)}|^2)|\psi_k^{(2)}|^2\right|&\le\delta\lambda^2+\iint_{\mathbb{R}^3\times\mathbb{R}^3}|w_\delta(x-y)||u_k^{(1)}(x)|^2|\psi_k^{(2)}(y)|^2\,\mathrm{d}x\mathrm{d}y\\
    &\le\delta\lambda^2+\iint_{\mathbb{R}^3\times\mathbb{R}^3}|w_\delta(x-y)|\mathbbm{1}_{|x-y|\ge R_k}|u_k^{(1)}(x)|^2|\psi_k^{(2)}(y)|^2\,\mathrm{d}x\mathrm{d}y
    \end{split}
\end{equation*}
since $\|u_k^{(1)}\|^2_{L^2},\|\psi_k^{(2)}\|^2_{L^2}\le\|u_{n_k}\|^2_{L^2}=\lambda$ by \eqref{proportionality} and $\text{dist}(\mathrm{supp}(u_k^{(1)}),\mathrm{supp}(\psi_k^{(2)}))\ge R_k$.

Then, letting $w_{j,\delta}=w_j\mathbbm{1}_{|w_1|\ge\delta}$, $j=1,2$,
\begin{equation*}
    \iint_{\mathbb{R}^3\times\mathbb{R}^3}w_{1,\delta}(x-y)\mathbbm{1}_{|x-y|\ge R_k}|u_k^{(1)}(x)|^2|\psi_k^{(2)}(y)|^2\,\mathrm{d}x\mathrm{d}y\le\|w_{1,\delta}\mathbbm{1}_{|\cdot|\ge R_k}\|_{L^\infty}\lambda^2\rightarrow0
\end{equation*}
as $k\rightarrow\infty$ since $w_1\rightarrow0$ at infinity; Finally, by Hölder inequality we have
\begin{equation*}
\begin{split}
    \iint_{\mathbb{R}^3\times\mathbb{R}^3}w_{2,\delta}(x-y)\mathbbm{1}_{|x-y|\ge R_k}|u_k^{(1)}(x)|^2|\psi_k^{(2)}(y)|^2\,\mathrm{d}x\mathrm{d}y&\le\||u_k^{(1)}|^2\|_{L^\frac{2q}{2q-1}}\||\psi_k^{(2)}|^2\|_{L^\frac{2q}{2q-1}}\|w_{2,\delta}\mathbbm{1}_{|\cdot|\ge R_k}\|_{L^q}\\
    &=\|u_k^{(1)}\|^2_{L^\frac{4q}{2q-1}}\|\psi_k^{(2)}\|^2_{L^\frac{4q}{2q-1}}\|w_{2,\delta}\mathbbm{1}_{|\cdot|\ge R_k}\|_{L^q}\rightarrow0
\end{split}
\end{equation*}
as $k\rightarrow\infty$ for every $1\le q<\frac{3}{2}$. Indeed, $w_{2,\delta}\in L^q$ for $1\le q<\frac{3}{2}$, so $\|w_{2,\delta}\mathbbm{1}_{|\cdot|\ge R_k}\|_{L^q}\rightarrow0$ as $k\rightarrow\infty$. Moreover, for such $q$ we have $3<\frac{4q}{2q-1}\le 4$, so by \eqref{proportionality} and the Sobolev embedding $H^1(\mathbb{R}^3)\subset L^\frac{4q}{2q-1}(\mathbb{R}^3)$
\begin{equation*}
    \|u_k^{(1)}\|^2_{L^\frac{4q}{2q-1}}\|\psi_k^{(2)}\|^2_{L^\frac{4q}{2q-1}}\lesssim\|u_k^{(1)}\|^2_{H^1}\|\psi_k^{(2)}\|^2_{H^1}\lesssim\|u_{n_k}\|_{H^1}^4
\end{equation*}
This gives us \eqref{split energy}, and by the argument mentioned at the beginning of this claim $u_{n_k}\rightarrow u_*$ strongly in $L^2$.
\end{proof}
\textbf{CLAIM 4}: The limit $u_*$ is a minimizer for \eqref{definition minimizer}.
\begin{proof}
Since $u_j\rightarrow u_*$ in $L^2(\mathbb{R}^3)$, $\|u_*\|^2_{L^2}=\lambda$, so $u_*\in\mathcal{S}_\lambda$. Then, by weak lower semicontinuity of the $L^2$ norm we have,
\begin{equation*}
    \|\nabla u_*\|^2_{L^2}\le\liminf_{j\rightarrow\infty}\|\nabla u_j\|^2_{L^2}.
\end{equation*}
Moreover, by Remark \ref{remark continuity}, we have that
\begin{equation*}
    \int_{\mathbb{R}^3}(w\ast|u_*|^2)|u_*|^2=\lim_{j\rightarrow\infty}\int_{\mathbb{R}^3}(w\ast|u_j|^2)|u_j|^2.
\end{equation*}
Combining these, we obtain
\begin{equation*}
    I(\lambda)\le \mathcal{E}(u_*)\le\liminf_{j\rightarrow\infty}\mathcal{E}(u_j)=I(\lambda),
\end{equation*}
so $u_*$ is a minimizer.
\end{proof}

\textbf{CLAIM 5:} For $0<\lambda<\lambda_*$ we have no minimizer for problem \eqref{definition minimizer}.

\begin{proof}
    In the following, we adapt the method proposed in \cite{Ikoma}:

First of all, notice that it is sufficient to prove that if a minimizer $u_\lambda\in\mathcal{S}_\lambda$ for \eqref{definition minimizer} exists, then $I(\lambda')<0$ for every $\lambda'>\lambda$. Then, let $u_\lambda\in\mathcal{S}_\lambda$ such that $I(\lambda)=\mathcal{E}(u_\lambda)$. Since $I(\tilde{\lambda})\le0$ for every $\tilde{\lambda}>0$ and $\|u\|_{\Dot{H}^1}\ge0$ for every $u\in H^1$, we have that $\int(w\ast|u_\lambda|^2)|u_\lambda|^2<0$. Then, writing $V(u)=\frac{1}{4}\int(w\ast |u|^2)|u|^2$, for every $\lambda'>\lambda$ we have
\begin{equation*}
    \begin{split}
        I(\lambda')&\le\frac{1}{2}\left\|\sqrt{\frac{\lambda'}{\lambda}}u_\lambda\right\|^2_{\Dot{H}^1}+V\left(\sqrt{\frac{\lambda'}{\lambda}}u_\lambda\right)=\frac{\lambda'}{2\lambda}\|u\|^2_{\Dot{H}^1}+\frac{\lambda'^2}{\lambda^2}V(u_\lambda)\\
        &=\frac{\lambda'}{\lambda}\left(\frac{1}{2}\|u_\lambda\|^2_{\Dot{H}^1}+V(u_\lambda)+\frac{\lambda'-\lambda}{\lambda}V(u_\lambda)\right)=\frac{\lambda'}{\lambda}\left(I(\lambda)+\frac{\lambda'-\lambda}{\lambda}V(u_\lambda)\right)<0.
    \end{split}
\end{equation*}
thus the existence of a minimizer with $L^2$ mass smaller than $\lambda_*$ would contradict the definition of $\lambda_*$ as the infimum of the $\tilde{\lambda}$ such that $I(\tilde{\lambda})<0$.
\end{proof}

\subsection{Properties of the minimizer}\label{section 2.2}
We now can proceed to prove the properties of the minimizer $u_*$ stated in Remark \ref{rem properties of minimizer}. First of all, since $u_*$ minimizes the energy \eqref{Hartree Energy}, it also solves the eigenvalue equation
\begin{equation}\label{eigen eq}
     -\Delta u+(w\ast |u|^2)u=\omega u,\;u\in H^1(\mathbb{R}^3)
\end{equation}
with $\omega=I(\lambda)<0$. 

We proceed in proving regularity of the minimizer to \eqref{definition minimizer} in the general case $w\in L^\infty+L^{3/2,\infty}$;
\begin{prop}\label{general regularity}
    Let $w\in L^\infty(\mathbb{R}^3)+L^{3/2,\infty}(\mathbb{R}^3)$ satisfy the hypotheses of Theorem \ref{main theorem}, and let $u$ be a solution to \eqref{eigen eq}. Then $u\in C^\infty(\mathbb{R}^3)$ and $u\in W^{2,r}(\mathbb{R}^3)$ for every $2\le r<\infty$.
\end{prop}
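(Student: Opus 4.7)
The plan is to rewrite \eqref{eigen eq} in the form $-\Delta u = g u$ with $g := \omega - w * |u|^2$ and then run two successive bootstraps based on standard elliptic regularity. The first task is to show $g \in L^\infty(\mathbb{R}^3)$: splitting $w = w_1 + w_2 \in L^\infty + L^{3/2,\infty}$ and applying the technical inequalities \eqref{tech 1} and \eqref{tech 2} with $u_1 = u_2 = u$ gives immediately
\begin{equation*}
\|w * |u|^2\|_{L^\infty} \lesssim \|w_1\|_{L^\infty} \|u\|_{L^2}^2 + \|w_2\|_{L^{3/2,\infty}} \|u\|_{\Dot{H^1}}^2,
\end{equation*}
so indeed $g \in L^\infty$.

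Next I would bootstrap the Lebesgue integrability of $u$. Since $u \in H^1$, by Sobolev embedding $u \in L^p$ for every $p \in [2,6]$, so $(1 + g) u \in L^p$ on that range. Rewriting the equation as $(-\Delta + 1) u = (1 + g) u$ and invoking the standard $L^p$-theory for the Bessel potential $(-\Delta + 1)^{-1} \colon L^p \to W^{2,p}$ (valid for $1 < p < \infty$) gives $u \in W^{2,p}$ for $p \in [2,6]$. In particular $u \in H^2(\mathbb{R}^3) \hookrightarrow L^\infty(\mathbb{R}^3)$, since $2 > 3/2$. Hence $u \in L^r$ for every $r \in [2,\infty]$, so $(1+g) u \in L^r$ for every $r \in [2,\infty)$, and a second application of elliptic regularity yields $u \in W^{2,r}$ for all $r \in [2,\infty)$, which is the second assertion of the proposition.

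To upgrade to $C^\infty$ I would argue by induction on $k \ge 2$ that $u \in W^{k,r}(\mathbb{R}^3)$ for every $r \in [2,\infty)$. For the inductive step, fix a multi-index $\alpha$ with $|\alpha| = k - 1$; applying $D^\alpha$ to $-\Delta u = gu$ gives
\begin{equation*}
-\Delta (D^\alpha u) = \sum_{\beta \le \alpha} \binom{\alpha}{\beta} (D^\beta g)(D^{\alpha - \beta} u).
\end{equation*}
The decisive point is that even though $w$ carries no differentiability, we can push every derivative onto $|u|^2$ via $D^\beta g = -\,w * D^\beta(|u|^2)$; by Leibniz, $D^\beta(|u|^2)$ is a linear combination of products $D^\gamma u \cdot D^{\beta-\gamma}\bar u$ with $|\gamma|, |\beta-\gamma| \le k-1$, and \eqref{tech 1}, \eqref{tech 2} yield
\begin{equation*}
\|w * (D^\gamma u \cdot D^{\beta-\gamma} \bar u)\|_{L^\infty} \lesssim \|w_1\|_{L^\infty} \|D^\gamma u\|_{L^2} \|D^{\beta-\gamma} u\|_{L^2} + \|w_2\|_{L^{3/2,\infty}} \|D^\gamma u\|_{\Dot{H^1}} \|D^{\beta-\gamma} u\|_{\Dot{H^1}},
\end{equation*}
which is finite by the inductive hypothesis applied with $r = 2$. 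Hence $D^\beta g \in L^\infty$ for all $|\beta| \le k - 1$, every term $(D^\beta g)(D^{\alpha-\beta} u)$ lies in $L^r$ for $r \in [2,\infty)$, and the same elliptic regularity argument promotes this to $D^\alpha u \in W^{2,r}$, i.e.~$u \in W^{k+1,r}$. Then $u \in \bigcap_{k,r} W^{k,r}$, whence $u \in C^\infty(\mathbb{R}^3)$ by Morrey--Sobolev embedding.

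The main conceptual obstacle, and the one place where the structure of $w$ really matters, is that $w$ admits no differentiation: each derivative of $g$ in the smoothness bootstrap must be absorbed by the factor $|u|^2$ through $D^\beta(w * |u|^2) = w * D^\beta(|u|^2)$ and then controlled via the Lorentz-space inequalities \eqref{tech 1}, \eqref{tech 2}. The other delicate step is the initial passage $u \in H^1 \Rightarrow u \in L^\infty$, which relies specifically on the dimension-three embedding $H^2 \hookrightarrow L^\infty$ and is what allows the first integrability bootstrap to close.
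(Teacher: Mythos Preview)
Your proof is correct and follows essentially the same route as the paper: both arguments first use \eqref{tech 1}--\eqref{tech 2} to place $w*|u|^2$ in $L^\infty$, then run an elliptic $L^p$ bootstrap (you via the Bessel potential, the paper via Calder\'on--Zygmund) to obtain $u\in W^{2,r}$ for all $r\ge 2$, and finally prove smoothness by differentiating the equation, transferring all derivatives through the convolution onto $|u|^2$, and controlling each resulting term with \eqref{tech 1}--\eqref{tech 2}. The only cosmetic difference is that the paper's smoothness induction is carried out in $H^m$ alone rather than in $W^{k,r}$ for all $r$, which suffices for $C^\infty$ via Sobolev--Morrey.
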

\begin{proof}
We prove the result assuming $u\ge0$ a.e. for ease of notation; the extension of the proof to general complex-valued $u$ is straightforward.

As $u\in L^p(\mathbb{R}^3)$, $2\le p\le6$ by Sobolev embedding, we have
\begin{equation*}
    \|(w\ast u^2)u\|_{L^p}\lesssim\|w\ast u^2\|_{L^\infty}\|u\|_{L^p}<\infty
\end{equation*}
by the technical inequalities \eqref{tech 1} and \eqref{tech 2}; then, by the Calderón-Zygmund $L^p$ estimates (see, for example, \cite[Chapter 9]{Regularity}) $u\in W^{2,r}(\mathbb{R}^3)$ for $2\le r\le 6$. Once again, by Sobolev embedding $u\in L^p(\mathbb{R}^3)$ for every $p\ge2$ and so $u\in W^{2,r}(\mathbb{R}^3)$ for every $r\ge2$.

We prove that $u\in C^\infty(\mathbb{R}^3)$ by induction: assuming that $u\in H^{m+1}(\mathbb{R}^3)$ for $m\ge0$, then we compute
\begin{equation*}
    \left\|\partial^m[(w\ast u^2)u]\right\|_{L^2}\le\sum_{k=0}^m\binom{m}{k}\left\|\left(w\ast \partial^k(u^2)\right)\partial^{m-k}u\right\|_{L^2}\le\sum_{k=0}^m\sum_{j=0}^k\binom{m}{k}\binom{k}{j}\left\|\left(w\ast (\partial^ju\,\partial^{k-j}u)\right)\partial^{m-k}u\right\|_{L^2}
\end{equation*}
where we wrote $\partial^k=\partial^k_{x_i}$ for ease of notation. Then, by the technical inequalities \eqref{tech 1} and \eqref{tech 2}
\begin{equation*}
    \begin{split}
        \left\|\left(w\ast (\partial^ju\partial^{k-j}u)\right)\partial^{m-k}u\right\|_{L^2}&\le\left\|w\ast(\partial^ju\partial^{k-j}u)\right\|_{L^\infty}\|\partial^{m-k}u\|_{L^2}\\
        &\lesssim\left(\|w_1\|_{L^\infty}\|\partial^j u\|_{L^2}\|\partial^{k-j} u\|_{L^2}+\|w_2\|_{L^{3/2,\infty}}\|\partial^j u\|_{\Dot{H}^1}\|\partial^{k-j} u\|_{\Dot{H}^1}\right)\|\partial^{m-k} u\|_{L^2}\\
        &=\left(\|w_1\|_{L^\infty}\|u\|_{\Dot{H}^j}\|u\|_{\Dot{H}^{k-j}}+\|w\|_{L^{3/2,\infty}}\|u\|_{\Dot{H}^{j+1}}\|u\|_{\Dot{H}^{k-j+1}}\right)\| u\|_{\Dot{H}^{m-k}}<\infty
    \end{split}
\end{equation*}
so that $\Delta u\in H^m(\mathbb{R}^3)$ and in particular $u\in H^{m+2}(\mathbb{R}^3)$ by standard elliptic regularity, hence by the Sobolev-Morrey embedding $u\in C^\infty(\mathbb{R}^3)$.
\end{proof}
As anticipated in Remark \ref{rem properties of minimizer}, we now prove that we can get more integrability if $w$ has no $L^\infty$ part; to do so, we generalize the method proposed in \cite[Proposition 4.1]{regularity} for $w=\frac{1}{|x|^2}$.

\begin{prop}\label{specific regularity}
    Let $w\in L^{3/2,\infty}(\mathbb{R}^3)$ satisfy the hypotheses of Theorem \ref{main theorem}, and let $u\in H^1(\mathbb{R}^3)$ be a solution of \eqref{eigen eq}. Then $u\in L^1(\mathbb{R}^3)\cap C^\infty(\mathbb{R}^3)$ and $u\in W^{2,r}(\mathbb{R}^3)$ for every $1<r<\infty$.
\end{prop}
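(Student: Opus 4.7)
The plan is to exploit the fact that $\omega = I(\lambda) < 0$ and rewrite \eqref{eigen eq} as $(-\Delta + |\omega|) u = -(w\ast|u|^2) u$. Since $|\omega| > 0$, the Green's function of $-\Delta + |\omega|$ on $\mathbb{R}^3$ is the Yukawa kernel $G(x) = \frac{e^{-\sqrt{|\omega|}\,|x|}}{4\pi|x|}$, which decays exponentially and, crucially, belongs to $L^1(\mathbb{R}^3)$ (in fact to $L^p$ for every $1\le p<3$). This exponential decay is the key difference from the Newtonian kernel used implicitly in Proposition \ref{general regularity} and is what buys the missing $L^1$ endpoint.

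First I would prove $u \in L^1(\mathbb{R}^3)$. Since $u \in H^1 \subset L^{12/5}$ by Sobolev embedding, the Young inequality in Lorentz spaces (Lemma \ref{Young Lorentz}) applied to $w \in L^{3/2,\infty}$ and $|u|^2 \in L^{6/5}$ gives $w\ast|u|^2 \in L^{2,6/5} \subset L^2$; Hölder's inequality then yields $(w\ast|u|^2)u \in L^1$. Representing the solution as $u = -G \ast \left[(w\ast|u|^2)u\right]$ and applying the classical Young inequality with $G \in L^1$ produces $u \in L^1$.

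Next I would bootstrap up to $W^{2,r}$ for every $r \in (1,\infty)$. Proposition \ref{general regularity} already gives $u \in W^{2,r}$ for all $r \ge 2$, and hence $u \in L^\infty$ by Sobolev embedding; interpolating with the newly gained $L^1$ bound shows $u \in L^p$ for every $1 \le p \le \infty$, so that $|u|^2 \in L^q$ for every $q \in [1,\infty]$. Given any $r \in (1,\infty)$, I can choose $p \in (3/2,\infty)$ and $s \in [1,\infty]$ with $1/r = 1/p + 1/s$; Young's inequality then gives $w\ast|u|^2 \in L^p$ and Hölder gives $(w\ast|u|^2)u \in L^r$. Applying the Calderón-Zygmund $L^r$ theory to $-\Delta u = \omega u - (w\ast|u|^2)u \in L^r$ yields $u \in W^{2,r}$ for every $r \in (1,\infty)$. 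Smoothness is unchanged from Proposition \ref{general regularity}.

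The only real obstacle is the $L^1$ step: the inclusions $u \in H^1 \cap L^\infty$ do not by themselves force $u \in L^1$, so one must genuinely use the equation. The strict negativity $\omega = I(\lambda) < 0$, established in the existence proof for $\lambda \in (\lambda_*,\lambda^*)$, is essential because it is what makes the convolution kernel exponentially decaying and, in particular, integrable; were $|\omega|$ equal to zero, one would be left with the non-integrable Newtonian kernel $1/(4\pi|x|)$ and the scheme would break down.
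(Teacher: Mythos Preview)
Your argument is correct and takes a genuinely different route from the paper. The paper runs a standard elliptic bootstrap: starting from $u\in L^{s_0}$ with $s_0=3$, it uses the Young and H\"older inequalities to place $(w\ast u^2)u$ in $L^r$ with $\tfrac{1}{r}=\tfrac{3}{s}-\tfrac{1}{3}$, invokes Calder\'on--Zygmund to get $u\in W^{2,r}$, and iterates $\tfrac{1}{s_{n+1}}=\tfrac{3}{s_n}-\tfrac{1}{3}$ until the full range $r\in(1,\infty)$ is reached. Your approach instead short-circuits the iteration: you obtain $(w\ast|u|^2)u\in L^1$ in a single step from $u\in H^1$, then use the explicit Bessel/Yukawa representation $u=-G\ast[(w\ast|u|^2)u]$ with $G\in L^1$ to get $u\in L^1$ directly; combined with the $L^\infty$ bound already furnished by Proposition~\ref{general regularity}, one application of Young/H\"older and Calder\'on--Zygmund then covers every $r\in(1,\infty)$ at once.

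What each approach buys: your argument is shorter and isolates very cleanly the role of the spectral gap $\omega=I(\lambda)<0$ (without it the kernel would be Newtonian and the $L^1$ step would fail), and it handles the $L^1$ endpoint explicitly, whereas the paper's bootstrap stops at $W^{2,r}$ for $r>1$ and leaves the passage to $L^1$ somewhat implicit. The paper's iteration, on the other hand, is more self-contained in that it does not lean on Proposition~\ref{general regularity} for the $L^\infty$ bound, and it mirrors the scheme of Moroz--Van~Schaftingen for the endpoint Choquard case. One small point you should make explicit: the identity $u=-G\ast[(w\ast|u|^2)u]$ needs the observation that both sides lie in $L^2$ and that the only $L^2$ solution of $(-\Delta+|\omega|)v=0$ is $v=0$; this is routine but worth a sentence.
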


\begin{proof}
    Once again, we carry out the proof in the case $u\ge0$ for ease of notation.
    
    $u\in C^\infty(\mathbb{R}^3)$ and $u\in W^{2,r}(\mathbb{R}^3)$ for every $2\le r<\infty$ by Proposition \ref{general regularity}. To prove that $u\in L^1(\mathbb{R}^3)$ and $u\in W^{2,r}(\mathbb{R}^3)$ for every $1<r<2$, we use elliptic bootstrapping:
    
Set $s_0=3$. Then, assume that $u\in L^s$ for every $s\in[s_n,3]$. Then, by the Young inequality \eqref{Young}
\begin{equation*}
    (w\ast u^2)\in L^t\text{ for every }t\text{ such that }\frac{1}{t}=\frac{2}{s}-\frac{1}{3}
\end{equation*}
and by standard Hölder inequality
\begin{equation*}
    (w\ast u^2)u\in L^r\text{ for every }t\text{ such that }\frac{1}{r}=\frac{3}{s}-\frac{1}{3}<1,
\end{equation*}
so $u\in W^{2,r}(\mathbb{R}^3)$ for such $r$ by the Calderón-Zygmund $L^p$ estimates. We have thus proved that if
\begin{equation*}
    \frac{1}{r}=\frac{3}{s}-\frac{1}{3}\text{ and }\frac{1}{s}<\frac{4}{9},
\end{equation*}
then $u\in W^{2,r}(\mathbb{R}^3)$; in other words, if
\begin{equation*}
    \begin{cases}
        \frac{1}{r}<\frac{3}{s_n}-\frac{1}{3}\\
        1<r<3
    \end{cases}
\end{equation*}
then $u\in W^{2,r}(\mathbb{R}^3)$. In turn, this implies that $u\in L^s$ if
\begin{equation*}
    \frac{1}{s}<\frac{3}{s_n}-\frac{1}{3}.
\end{equation*}
Now, since $s_n<3$, we have
\begin{equation*}
    \frac{1}{3}<\frac{1}{s_n}<\frac{3}{s_n}-\frac{1}{3}.
\end{equation*}
Now, if $\frac{3}{s_n}-\frac{1}{3}\ge1$ (so that $\frac{1}{s_n}\ge\frac{4}{9}$) we are done. Otherwise, set $\frac{1}{s_{n+1}}=\frac{3}{s_n}-\frac{1}{3}$ and we are done in a finite number of steps.
\end{proof}

Once we have regularity of the minimizer, we can proceed to prove that all minimizers of \eqref{definition minimizer} are positive up to a phase:

\begin{thm}\label{positivity theorem}
    Let $w\in L^\infty(\mathbb{R}^3)+L^{3/2,\infty}(\mathbb{R}^3)$ satisfy the hypotheses of Theorem \ref{main theorem} and let $u\in\mathcal{S}_\lambda$ be a minimizer for \eqref{definition minimizer}. Then $u$ is of the form
    \begin{equation*}
        u(x)=e^{i\theta}|u(x)|
    \end{equation*}
    for some fixed phase $\theta\in[0,2\pi)$ and $|u(x)|>0$ for every $x\in\mathbb{R}^3$.
\end{thm}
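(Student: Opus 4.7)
The plan is to establish the theorem in three stages: first reduce to a nonnegative minimizer $|u|$ via the diamagnetic inequality, then prove $|u| > 0$ everywhere by a strong maximum principle argument, and finally recover constancy of the phase from the equality case in the diamagnetic inequality.

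First I would show that $|u|$ is itself a minimizer. For every $v \in H^1(\mathbb{R}^3;\mathbb{C})$, the classical diamagnetic inequality gives $|v| \in H^1(\mathbb{R}^3)$ together with $|\nabla|v||(x) \le |\nabla v|(x)$ for a.e.\ $x$. Since the interaction term of $\mathcal{E}$ depends only on $|u|^2$ and $\||u|\|_{L^2} = \|u\|_{L^2}$, this forces $|u| \in \mathcal{S}_\lambda$ with $\mathcal{E}(|u|) \le \mathcal{E}(u) = I(\lambda)$. Hence $|u|$ is a nonnegative minimizer and the diamagnetic inequality is saturated: $|\nabla|u||(x) = |\nabla u|(x)$ for a.e.\ $x$.

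Next I would prove $|u| > 0$ pointwise. By the analysis preceding Proposition~\ref{general regularity}, $|u|$ satisfies the Euler--Lagrange equation $-\Delta|u| + V(x)|u| = \omega|u|$ with $V = w \ast |u|^2$. The technical inequalities \eqref{tech 1} and \eqref{tech 2} give $V \in L^\infty(\mathbb{R}^3)$, and Proposition~\ref{general regularity} provides $|u| \in C^\infty(\mathbb{R}^3)$. Rewriting the equation as $-\Delta|u| + (V-\omega)|u| = 0$ with bounded coefficient $V-\omega$, I would invoke the Harnack inequality (equivalently, the strong maximum principle) for nonnegative smooth solutions of second-order linear elliptic equations with bounded coefficients: if $|u|(x_0) = 0$ for some $x_0$, Harnack forces $|u| \equiv 0$ on a ball centered at $x_0$, and by iterating over an exhausting chain of balls, on all of $\mathbb{R}^3$; this contradicts $\|u\|_{L^2}^2 = \lambda > 0$. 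Therefore $|u|(x) > 0$ for every $x \in \mathbb{R}^3$.

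Finally, I would extract constancy of the phase from the saturated diamagnetic inequality. Writing $u = u_1 + iu_2$ with $u_1, u_2$ real-valued in $H^1$, one has $|u|\nabla|u| = u_1 \nabla u_1 + u_2 \nabla u_2$ a.e., and Cauchy--Schwarz shows that equality $|\nabla|u|| = |\nabla u|$ a.e.\ is equivalent to the vanishing of the Wronskian $u_1 \nabla u_2 - u_2 \nabla u_1$ a.e. Since $|u| > 0$ everywhere and $u \in C^\infty$ (by Proposition~\ref{general regularity} in its complex form), the lift $\theta(x) = \arg u(x)$ is well-defined and smooth on the simply connected space $\mathbb{R}^3$, with $\nabla \theta = (u_1 \nabla u_2 - u_2 \nabla u_1)/|u|^2 \equiv 0$. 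By connectedness, $\theta$ is constant and $u = e^{i\theta}|u|$.

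The main obstacle is the interplay between regularity and the pointwise conclusions. The Harnack step requires $V \in L^\infty$ and continuity of $|u|$, both provided by the technical inequalities and by Proposition~\ref{general regularity}; the phase-constancy step then requires combining pointwise positivity with the equality case of the diamagnetic inequality to upgrade an a.e.\ identity into the global constancy of $\theta$, for which the smoothness of $u$ is essential.
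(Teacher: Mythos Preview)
Your proposal is correct and follows essentially the same three-step approach as the paper: (i) $|u|$ is a minimizer by the diamagnetic inequality, (ii) $|u|>0$ everywhere by continuity and the strong maximum principle, (iii) the phase is constant. The only difference is that where the paper cites an external lemma (Cingolani--Secchi--Squassina) for step (iii), you spell out the equality case of the diamagnetic inequality via the vanishing Wronskian $u_1\nabla u_2 - u_2\nabla u_1$ and the smooth global lift of $\arg u$ on the simply connected $\mathbb{R}^3$; this is a self-contained elaboration of precisely that cited lemma.
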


\begin{proof}
    Since $|\nabla|u|(x)|\le|\nabla u(x)|$ a.e., then $\mathcal{E}(|u|)\le\mathcal{E}(u)$ for every $u\in H^1$; thus, if $u$ is a minimizer of \eqref{definition minimizer} so is $|u|\ge0$. Then, by continuity of $|u|$ and the strong maximum principle for second order differential operators we have that $|u|>0$, so all ground states cannot vanish anywhere in $\mathbb{R}^3$. In turn, this implies that $u$ does not vanish and has a constant phase \cite[Lemma 2.10]{Positivity}.
\end{proof}

Finally, we prove radiality of the minimizer, under the additional assumption that $w$ is radial and \textit{non-decreasing} (meaning $w(x)=W(|x|)$ with $W\,:\,(0,\infty)\rightarrow\mathbb{R}$ non-decreasing). First of all, notice that this in particular implies that $w(x)\le0$ a.e. and $w(x)\xrightarrow{|x|\rightarrow\infty}0$. Then, we make use of the \textit{symmetric decreasing rearrangement} of a non-negative measurable function $f$, namely
\begin{equation*}
    f^S(x)=\int_0^\infty\mathbbm{1}_{\{y\,:\,f(y)>t\}^S}(t)\,\mathrm{d}t.
\end{equation*}
where the \textit{symmetric rearrangement} of a measurable set $A\subset\mathbb{R}^d$ is defined as
\begin{equation*}
    A^S=\{x\in\mathbb{R}^d\::\:\omega_d|x|^d\le|A|\};
\end{equation*}
with $\omega_d$ being the volume of the ball of radius 1 in $\mathbb{R}^d$ and $|A|$ is the measure of $A$. We refer to \cite{Lieb Loss} for more details about rearrangements and rearrangement inequalities.

Notice that our assumptions on $w$ imply that $|w|^S=|w|$, as $-w$ is already radial and non-increasing.
\begin{prop}\label{symmetry}
    Let $w\in L^\infty(\mathbb{R}^3)+L^{3/2,\infty}(\mathbb{R}^3)$ be a radial \emph{non-decreasing} function satisfying the hypotheses of Theorem \ref{main theorem}. Then there exist $x_0\in\mathbb{R}^3$ and $v\,:(0,\infty)\rightarrow\mathbb{R}$ non-increasing such that $u(x)=v(|x-x_0|)$ is a minimizer for \eqref{definition minimizer}.
\end{prop}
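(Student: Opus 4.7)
The strategy is the classical rearrangement argument, exploiting the sign condition on $w$ forced by the monotonicity and decay assumptions.

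First, I would start from any minimizer $u\in\mathcal{S}_\lambda$ of \eqref{definition minimizer}, which by Theorem \ref{positivity theorem} can be taken real and non-negative (in fact pointwise positive, but non-negativity is what rearrangements need). Next, I would observe that the hypotheses on $w$ force $w\le 0$ a.e.: indeed $W:(0,\infty)\to\mathbb{R}$ is non-decreasing and, by the $L^\infty$-vanishing-at-infinity assumption on $w_1$ (together with $w_2\in L^{3/2,\infty}$ having no essential $L^\infty$ mass at infinity), we have $\operatorname{ess\,lim}_{r\to\infty}W(r)=0$, whence $W\le 0$. In particular $-w\ge 0$ is radial and non-increasing, so $(-w)^S=-w$.

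Then I would consider the symmetric decreasing rearrangement $u^S$ of $u$. The key ingredients are standard:
\begin{itemize}
    \item $\|u^S\|_{L^2}=\|u\|_{L^2}=\sqrt{\lambda}$, so $u^S\in\mathcal{S}_\lambda$;
    \item $(u^2)^S=(u^S)^2$ (rearrangement commutes with monotone transformations);
    \item (Pólya--Szegő) $\|\nabla u^S\|_{L^2}\le\|\nabla u\|_{L^2}$, which requires $u\in H^1$ and $u\ge 0$, both of which hold;
    \item (Riesz rearrangement inequality) for the non-negative triple $(u^2,-w,u^2)$,
    \begin{equation*}
        \iint_{\mathbb{R}^3\times\mathbb{R}^3} u^2(x)\bigl(-w(x-y)\bigr)u^2(y)\,\mathrm{d}x\mathrm{d}y\le\iint_{\mathbb{R}^3\times\mathbb{R}^3}(u^S)^2(x)\bigl(-w(x-y)\bigr)(u^S)^2(y)\,\mathrm{d}x\mathrm{d}y,
    \end{equation*}
    using $(-w)^S=-w$ and $(u^2)^S=(u^S)^2$. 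Flipping signs yields
    \begin{equation*}
        \iint (u^S)^2(x)\,w(x-y)\,(u^S)^2(y)\,\mathrm{d}x\mathrm{d}y\le\iint u^2(x)\,w(x-y)\,u^2(y)\,\mathrm{d}x\mathrm{d}y.
    \end{equation*}
\end{itemize}

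Combining Pólya--Szegő with the Riesz inequality gives $\mathcal{E}(u^S)\le\mathcal{E}(u)=I(\lambda)$, and since $u^S\in\mathcal{S}_\lambda$ we must have equality, so $u^S$ is itself a minimizer. Writing $u^S(x)=v(|x|)$ with $v:(0,\infty)\to\mathbb{R}$ non-increasing (which is precisely the structure of the symmetric decreasing rearrangement) and taking $x_0=0$ completes the proof.

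There is no serious obstacle here; the only points that need care are (i) checking that the sign and monotonicity hypotheses on $w$ really give $-w\ge 0$ and $(-w)^S=-w$ so that Riesz applies in the right direction, and (ii) verifying that $u^S$ lies in $H^1$ and thus in $\mathcal{S}_\lambda$, which is immediate from the preservation of $L^2$-norm and Pólya--Szegő. All remaining facts about rearrangements are standard and can be cited from \cite{Lieb Loss}.
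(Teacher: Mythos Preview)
Your proposal is correct and follows essentially the same route as the paper's proof: both apply the Riesz rearrangement inequality to the triple $(|u|^2,|w|,|u|^2)$ (equivalently $(u^2,-w,u^2)$ since $w\le 0$) together with the P\'olya--Szeg\H{o} inequality to conclude $\mathcal{E}(u^S)\le\mathcal{E}(u)$. Your write-up is in fact slightly more careful than the paper's in spelling out why $w\le 0$ and $(-w)^S=-w$, and in invoking Theorem~\ref{positivity theorem} to justify working with a non-negative minimizer.
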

\begin{proof}
By the Riesz rearrangement inequality (see \cite{Riesz} for the $1$ dimensional case or \cite{Br Li Lo} for the generalization to $\mathbb{R}^d$) we have
\begin{equation*}
    \begin{split}
        \iint_{\mathbb{R}^3\times\mathbb{R}^3}|u(x)|^2|w(x-y)||u(y)|^2\,\mathrm{d}x\mathrm{d}y&\le\iint_{\mathbb{R}^3\times\mathbb{R}^3}(|u|^2)^S(x) |w|^S(x-y)(|u|^2)^S(y)\,\mathrm{d}x\mathrm{d}y\\
        &=\iint_{\mathbb{R}^3\times\mathbb{R}^3}|u^S(x)|^2|w(x-y)||u^S(y)|^2\,\mathrm{d}x\mathrm{d}y,
    \end{split}
\end{equation*}
while by the  Pólya–Szegő inequality \cite[Chapter 7]{Polya Szego} we have
\begin{equation}
    \|\nabla u^S\|_{L^2}\le\|u\|_{\Dot{H}^1}.
\end{equation}
Putting these two together, and recalling that $w\le0$ we get $\mathcal{E}(u^S)\le\mathcal{E}(u)$, which proves that the minimizer can be chosen radial (about some point $x_0$ since $\mathcal{E}$ is translation-invariant) and non-increasing.
\end{proof}

\section{Time dependent Hartee equation}
In this section, we prove all results regarding the Cauchy problem \eqref{evolution equation}. We start with the proof of global existence of a solution as stated in \ref{evo thm}, which relies on a fixed point argument for local existence and a conservation of energy argument for the extension to a global solution; to do so, we use standard techniques (see, for instance, \cite{Evolution} for a general overview or \cite{Evolution pro} for a more complete study). Then, we briefly recall the definition of orbital stability and prove it for the set of ground states of the Hartree equation, similarly to \cite{Fisica}.

\subsection{Proof of Theorem \ref{evo thm}}
\begin{proof}
Let $0\not\equiv w\in L^\infty(\mathbb{R}^3)+L^{3/2,\infty}(\mathbb{R}^3)$ and $u_0\in H^1(\mathbb{R}^3)$ such that \eqref{smallness condition} holds.

We start by proving local existence; first, we know \cite[Lemma 7.1.1]{Evolution} that for $u_0\in H^1$ and $T>0$, $u\in C([0,T];H^1)$ solves \eqref{evolution equation} if and only if it satisfies Duhamel's formula
\begin{equation}\label{integral evolution}
    u(t)=e^{it\Delta}u_0-i\int_0^te^{i(t-s)\Delta}(w\ast|u|^2)(s)u(s)\,\mathrm{d}s.
\end{equation}
We study this as a fixed point equation in $X=C([0,T];H^1)$: we want to apply Banach's fixed point theorem to the function $F\,:\,D\rightarrow X$ defined by
\begin{equation*}
    F(u)(t)=e^{it\Delta}u_0-i\int_0^te^{i(t-s)\Delta}g(u(s))\,\mathrm{d}s,
\end{equation*}
where $g(v)=(w\ast|v|^2)v$, in the closed subset of $X$
\begin{equation*}
    D=\overline{B_1(t\mapsto e^{it\Delta}u_0)}\cap\{u\in X\,:\,u(0)=u_0\}.
\end{equation*}
We start by proving that $F$ well defined: by technical inequalities \eqref{tech 1}, \eqref{tech 2} and \eqref{tech 3}
\begin{equation*}
    \|g(u(s))\|_{L^2}\le\|(w_1\ast|u(s)|^2)u(s)\|_{L^2}+\|(w_2\ast|u(s)|^2)u(s)\|_{L^2}\lesssim\|u(s)\|^3_{H^1}
\end{equation*}
and
\begin{equation*}
    \begin{split}
        \|\nabla(g(u(s)))\|_{L^2}&\lesssim\|w_1\ast \nabla|u(s)|^2\|_{L^\infty}\| u(s)\|_{L^2}+\|(w_2\ast\nabla|u(s)|^2)u(s)\|_{L^2}+\|w\ast |u(s)|^2\|_{L^\infty}\|u(s)\|_{\Dot{H}^1}\\
        &\lesssim\|u(s)\|^3_{H^1}.
    \end{split}
\end{equation*}
We remark that when estimating the second term one has to be careful to have the $L^2$ norm of the gradient when using \eqref{tech 3}, namely
\begin{equation*}
         \|(w_2\ast \nabla|u(s)|^2)u(s)\|_{L^2}=\|(w_2\ast(2u(s)\nabla u(s)))u(s)\|_{L^2}\le2\|w_2\|_{L^{3/2,\infty}}\|u(s)\|_{\Dot{H}^1}^2\|u(s)\|_{\Dot{H}^1}.
\end{equation*}
Then, for $0\le t\le T$,
\begin{equation*}
    \begin{split}
        \|F(u)(t)\|_{H^1}&\le\|e^{it\Delta} u_0\|_{H^1}+\int_0^t\|e^{i(t-s)\Delta}g(u(s))\|_{H^1}\,\mathrm{d}s\\
        &\lesssim \|u_0\|_{H^1}+T\sup_{0\le s\le T}\|u(s)\|^3_{H^1}<\infty
    \end{split}
\end{equation*}
where we have used the unitarity of the semigroup generated by $-\Delta$ and that $u\in X$.

Similarly, for $0\le t_1,t_2\le T$,
\begin{equation*}
    \begin{split}
        \|F(u)(t_2)-F(u)(t_1)\|_{H^1}&\le\|e^{it_2\Delta}u_0-e^{it_1\Delta}u_0\|_{H^1}+\left\|\int_0^{t_2}e^{i(t_2-s)\Delta}g(u(s))\,\mathrm{d}s-\int_0^{t_1}e^{i(t_1-s)\Delta}g(u(s))\,\mathrm{d}s\right\|_{H^1}\\
        &=o_{|t_2-t_1|}(1).
    \end{split}
\end{equation*}
by strong continuity and unitarity of $e^{it\Delta}$, together with technical inequalities \eqref{tech 1}, \eqref{tech 2} and \eqref{tech 3}.
    
Next, we prove that $F(D)\subset D$ for $T$ small enough: clearly $F(u)(0)=u_0$, and
\begin{equation*}
    \|F(u)-(t\mapsto e^{it\Delta}u_0)\|_X=\sup_{0\le t\le T}\|F(u)(t)-e^{it\Delta}u_0\|_{H^1}\le\sup_{0\le t\le T}\int_0^t \|g(u(s))\|_{H^1}\,\mathrm{d}s\lesssim T
\end{equation*}
so for $T$ small enough $F(u)\in B_1(t\mapsto e^{it\Delta}u_0)$.

Finally, we prove that for $T$ small $F$ is a contraction: for $u_1,u_2\in D$,
\begin{equation*}
    \begin{split}
        \|F(u_1)-F(u_2)\|_X&=\sup_{0\le t\le T}\left\|\int_0^t e^{i(t-s)\Delta}(g(u_1(s))-g(u_2(s)))\,\mathrm{d}s \right\|_{H^1}\le\int_0^T\|g(u_1(s))-g(u_2(s))\|_{H^1}\,\mathrm{d}s\\
        &\lesssim \int_0^T \|u_1(s)-u_2(s)\|_{H^1}\,\mathrm{d}s\le T\sup_{0\le s\le T}\|u_1(s)-u_2(s)\|_{H^1}=T\|u_1-u_2\|_X
    \end{split}
\end{equation*}
where we have used the technical inequalities \eqref{tech 1}, \eqref{tech 2}, \eqref{tech 3} as in the proof of well posedness of $F$. Thus, for $T$ small enough $F$ is a contraction and in turn there exists a unique $u\in D$ solution to \eqref{integral evolution}.

The conservation of mass and energy hold for $0\le t\le T$ by simple calculations (see, for instance, \cite[Lemma 7.2.2]{Evolution}. Then, we use the following \cite[Theorem 7.4.1]{Evolution}
\begin{thm}[Maximal time]\label{Maximal Time}
    Let $0\not\equiv w\in L^\infty(\mathbb{R}^3)+L^{3/2,\infty}(\mathbb{R}^3)$. Than there exists a function $T\,:\,H^1(\mathbb{R}^3)\rightarrow (0,+\infty]$ such that for every $u_0\in H^1$ there exists a unique $u\in C\left([0,T(u_0));H^1\right)$ solution to \eqref{evolution equation} for every $0\le T<T(u_0)$. Moreover,
    \begin{itemize}
        \item  If $T(u_0)<+\infty$, then $\displaystyle\lim_{t\rightarrow T(u_0)}\|u(t)\|_{H^1}=+\infty$;
        \item (Conservation of mass) $\|u(t)\|^2_{L^2}=\|u_0\|^2_{L^2}$ for every $0\le t<T(u_0)$;
        \item (Conservation of energy) $\mathcal{E}(u(t))=\mathcal{E}(u_0)$ for every $0\le t<T(u_0)$;
        \item If $u_0\in H^2$, then $u\in C\left([0,T(u_0));H^1\right)\cap C^1\left([0,T(u_0));L^2\right)$.
    \end{itemize}
\end{thm}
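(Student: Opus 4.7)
The plan is to build upon the local existence and uniqueness already established via the Banach fixed point argument in $D \subset X = C([0,T]; H^1)$, and upgrade it to a maximal solution with a blow-up alternative. First, I would define $T(u_0) = \sup\{T > 0 \,:\, \exists\,u \in C([0,T]; H^1) \text{ solving \eqref{integral evolution}}\}$. The fixed point argument gives $T(u_0) > 0$; moreover, inspecting that argument one sees that the local existence time depends only on $\|u_0\|_{H^1}$, say $\tau = \tau(\|u_0\|_{H^1}) > 0$ with $\tau$ continuous and monotone. Uniqueness on any common interval follows by subtracting two solutions $u_1, u_2$ and invoking the technical inequalities \eqref{tech 1}, \eqref{tech 2}, \eqref{tech 3}, which give a bound of the form
\[
\|u_1(t) - u_2(t)\|_{H^1} \le C(M)\int_0^t \|u_1(s) - u_2(s)\|_{H^1}\,\mathrm{d}s,
\]
where $M = \max_i \sup_{[0,T]} \|u_i(s)\|_{H^1}$, from which Gronwall yields $u_1 \equiv u_2$.

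For the blow-up alternative, suppose $T(u_0) < \infty$ and $\liminf_{t \to T(u_0)^-} \|u(t)\|_{H^1} = M < \infty$. Pick $t_n \uparrow T(u_0)$ with $\|u(t_n)\|_{H^1} \le 2M$. Applying the local existence result with initial datum $u(t_n)$ yields solutions on $[t_n, t_n + \tau(2M)]$ which glue continuously to $u|_{[0,t_n]}$ by uniqueness, extending the solution past $T(u_0)$ for $n$ large enough, contradicting maximality. Hence if $T(u_0)<\infty$ then $\|u(t)\|_{H^1}\to\infty$ as $t\to T(u_0)^-$. For the $u_0 \in H^2$ case, I would rerun the fixed point argument in $C([0,T]; H^2)$: using Leibniz on $\Delta[(w*|u|^2)u]$ and controlling the resulting convolutions via \eqref{tech 1}, \eqref{tech 2}, \eqref{tech 3} as in Proposition \ref{general regularity}, the nonlinearity closes in $H^2$; then $\partial_t u = i\Delta u - i(w*|u|^2)u \in C([0,T]; L^2)$ directly from the equation.

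For the conservation laws, the formal computation $\frac{d}{dt}\|u\|_{L^2}^2 = 2\mathrm{Re}\langle u, \partial_t u\rangle$ and $\frac{d}{dt}\mathcal{E}(u) = \mathrm{Re}\langle u'(t), -\Delta u + (w*|u|^2)u\rangle = 0$ is rigorous when $u_0 \in H^2$ (since then $\partial_t u \in C([0,T];L^2)$ and $-\Delta u + (w*|u|^2)u \in C([0,T]; L^2)$ pair in duality). For general $u_0 \in H^1$, I would approximate by $u_0^n \in H^2$ with $u_0^n \to u_0$ in $H^1$, use the Lipschitz continuous dependence derived from the Gronwall estimate above to obtain solutions $u^n$ on a common subinterval and $u^n \to u$ in $C([0,T]; H^1)$, and then pass to the limit in the conservation laws, using that both $\|\cdot\|_{L^2}^2$ and $\mathcal{E}$ are continuous on $H^1$ (the latter by Remark \ref{remark continuity}).

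The main obstacle I expect is the rigorous justification of the conservation of energy for merely $H^1$ initial data, combined with the continuous dependence needed both for uniqueness-on-intervals and for the $H^2$ approximation scheme. The key technical ingredient throughout is the trilinear estimate \eqref{tech 3}, which makes the map $u \mapsto (w*|u|^2)u$ locally Lipschitz from $H^1$ into $L^2$ and, combined with Leibniz, locally Lipschitz from $H^1$ into itself; this is exactly what drives Banach's contraction, uniqueness via Gronwall, continuous dependence, and conservation of the energy by approximation. Once those pieces are in place, the remaining assertions of the theorem are formal consequences.
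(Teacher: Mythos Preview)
Your proposal is correct and follows the standard route. Note, however, that the paper does not actually give a self-contained proof of this theorem: it carries out the Banach fixed point argument for local existence in $C([0,T];H^1)$ (establishing along the way that the nonlinearity $u\mapsto (w\ast|u|^2)u$ is locally Lipschitz from $H^1$ to $H^1$ via \eqref{tech 1}, \eqref{tech 2}, \eqref{tech 3}), cites \cite[Lemma 7.2.2]{Evolution} for the conservation laws on the local interval, and then simply invokes \cite[Theorem 7.4.1]{Evolution} for the maximal extension, blow-up alternative, and $H^2$ regularity. Your sketch supplies exactly the arguments one would find in that reference (definition of $T(u_0)$ as a supremum, uniqueness via Gronwall, blow-up alternative by restarting from a bounded subsequence, $H^2$ persistence by rerunning the contraction in $C([0,T];H^2)$, and conservation laws by $H^2$ approximation plus continuous dependence), so there is no methodological difference---you are just writing out what the paper outsources.
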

This means that in order to prove global existence we just need to prove that $u(t)$ is uniformly bounded in $H^1$. We write 
\begin{equation*}
    \mathcal{E}(u)=\frac{1}{2}\|u\|^2_{\Dot{H^1}}+V(u),\;\;V(u)=\frac{1}{4}\int_{\mathbb{R}^3}\left(w\ast|u|^2\right)|u|^2.
\end{equation*}
Now, fixing $u_0\in H^1$ and letting $\lambda=\|u_0\|^2_{L^2}$,
by conservation of energy we have that 
\begin{equation*}
    \mathcal{E}(u_0)=\frac{1}{2}\|u(t)\|^2_{\Dot{H^1}}+V(u(t))\;\text{for every}\;0\le t<T(u_0),
\end{equation*}
which, together with \eqref{tech 2} and the definition of $K$, yields
\begin{equation*}
        \frac{1}{2}\|u(t)\|_{\Dot{H^1}}^2\le|\mathcal{E}(u_0)|+|V(u(t))|\le |\mathcal{E}(u_0)|+\frac{\lambda^2}{4}\|w_1\|_{L^\infty}+\frac{K\lambda}{4}\|w_2\|_{L^{3/2,\infty}}\|u(t)\|^2_{\Dot{H^1}}.
\end{equation*}
Finally, since $K\lambda\|w_2\|_{L^{3/2,\infty}}<2$ we have
\begin{equation}\label{control of energy}
    \|u(t)\|^2_{\Dot{H^1}}\le\frac{4|\mathcal{E}(u_0)|+\lambda^2\|w_1\|_{L^\infty}}{2-K\lambda\|w_2\|_{L^{3/2,\infty}}},
\end{equation}
which in turn implies global existence of the solution of \eqref{evolution equation}.

The continuity of the solution with respect to the initial datum is ensured by \cite[Proposition 4.3]{Ginibre Velo}.
\end{proof}

\subsection{Orbital stability}

For $\lambda_*<\lambda<\lambda^*$, we define the set of ground states with mass $\lambda$
\begin{equation*}
    M_\lambda=\left\{u\in\mathcal{S}_\lambda\,:\,I(\lambda)=\mathcal{E}(u)\right\},
\end{equation*}
together with the distance function
\begin{equation*}
    d(v)=\inf_{u\in M_\lambda}\|u-v\|_{H^1}.
\end{equation*}
\begin{defn}[Orbital Stability]
    $M_\lambda$ is said \emph{orbitally stable} if for every $\varepsilon>0$ there exists $\delta>0$ such that $d(u_0)<\delta$ implies $d(u(t))<\varepsilon$ for every $t>0$.
\end{defn}

We prove that for every $\lambda$ for which $M_\lambda\ne \emptyset$ and for which there exists a global solution the system is orbitally stable; the original ideas for proving orbital stability for the Hartree equation go back to \cite{Cazenave Lions};
\begin{thm}\label{orbital stability thm}
    Let $w\not \equiv0$ satisfy the hypotheses of Theorem \ref{main theorem}. Then for every $\lambda_*<\lambda<\lambda^*$ such that $K\lambda\|w_2\|_{L^{3/2,\infty}}<2$ orbital stability of $M_\lambda$ holds.
\end{thm}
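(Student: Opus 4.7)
The plan is to argue by contradiction, in the spirit of the classical Cazenave--Lions strategy. Suppose orbital stability fails at some mass $\lambda$ satisfying the hypotheses. Then there exist $\varepsilon>0$, a sequence $(u_{0,n})\subset H^1$ with $d(u_{0,n})\to 0$, and times $t_n>0$ such that $d(u_n(t_n))\ge \varepsilon$, where $u_n(t)$ denotes the solution of \eqref{evolution equation} with initial datum $u_{0,n}$. Pick $\phi_n\in M_\lambda$ with $\|u_{0,n}-\phi_n\|_{H^1}\to 0$; then $\|u_{0,n}\|_{L^2}^2\to\lambda$ and, by uniform continuity of $\mathcal{E}$ on bounded subsets of $H^1$ (Remark \ref{remark continuity}), $\mathcal{E}(u_{0,n})\to I(\lambda)$. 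For $n$ large the smallness condition $K\|u_{0,n}\|_{L^2}^2\|w_2\|_{L^{3/2,\infty}}<2$ from \eqref{smallness condition} is met (since it holds strictly at $\lambda$), so Theorem \ref{evo thm} provides a global solution $u_n\in C([0,\infty);H^1)$ together with conservation of mass and energy. Consequently $\|u_n(t_n)\|_{L^2}^2\to\lambda$ and $\mathcal{E}(u_n(t_n))\to I(\lambda)$.

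Next I would turn $u_n(t_n)$ into a genuine minimizing sequence on $\mathcal{S}_\lambda$. Setting $v_n=\sqrt{\lambda/\|u_n(t_n)\|_{L^2}^2}\,u_n(t_n)$ I obtain $v_n\in\mathcal{S}_\lambda$ with $\|v_n-u_n(t_n)\|_{H^1}=o(1)$, by the very same rescaling estimate used in the proof of Lemma \ref{lsc of gs energy lemma}. Combining this with uniform continuity of $\mathcal{E}$ on bounded sets gives $\mathcal{E}(v_n)\to I(\lambda)$, so $(v_n)\subset\mathcal{S}_\lambda$ is a minimizing sequence for $I(\lambda)$.

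Now I would apply the concentration-compactness analysis already developed in the proof of Theorem \ref{main theorem}: since $\lambda_*<\lambda<\lambda^*$, Claims 1 and 3 rule out vanishing and dichotomy for $(v_n)$, so up to a subsequence and translations $x_n\in\mathbb{R}^3$ one has $v_n(\cdot+x_n)\to u_*$ strongly in $L^2$ and weakly in $H^1$, for some $u_*\in\mathcal{S}_\lambda$. The weak lower semicontinuity of the Dirichlet energy, together with continuity of the convolution term on bounded subsets of $H^1$ along $L^2$-strongly convergent sequences (Remark \ref{remark continuity}), forces $\mathcal{E}(u_*)\le\liminf \mathcal{E}(v_n)=I(\lambda)$, so $u_*\in M_\lambda$. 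To upgrade the $L^2$-strong convergence of $v_n(\cdot+x_n)$ to $H^1$-strong convergence I would compare $\mathcal{E}(v_n)\to\mathcal{E}(u_*)$ with the convergence of the convolution term, which yields $\|\nabla v_n\|_{L^2}\to\|\nabla u_*\|_{L^2}$; combined with weak $H^1$ convergence this gives strong $H^1$ convergence.

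Since $\mathcal{E}$ is translation invariant, $u_*(\cdot-x_n)\in M_\lambda$ and thus $d(v_n)\le\|v_n-u_*(\cdot-x_n)\|_{H^1}\to 0$. Using $\|v_n-u_n(t_n)\|_{H^1}=o(1)$ I conclude $d(u_n(t_n))\to 0$, contradicting $d(u_n(t_n))\ge\varepsilon$. The main obstacle, as I see it, is entirely the compactness step: feeding a sequence that is only asymptotically in $\mathcal{S}_\lambda$ and only asymptotically minimizing into the concentration-compactness machinery, and then promoting $L^2$ strong convergence to $H^1$ strong convergence. Both difficulties are already resolved by the technology built in Section 3, so the proof essentially assembles existing pieces.
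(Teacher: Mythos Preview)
Your proof is correct and follows the same Cazenave--Lions contradiction scheme as the paper, but with one structural difference worth noting. The paper splits the argument in two: it first restricts to initial data $u_0\in\mathcal{S}_\lambda$, so that by conservation of mass $u_n(t_n)$ lies \emph{exactly} in $\mathcal{S}_\lambda$ and is directly a minimizing sequence; it then handles general initial data $u_0'\notin\mathcal{S}_\lambda$ by invoking the continuous dependence on initial data from Theorem~\ref{evo thm}. You instead treat all initial data at once by rescaling $u_n(t_n)$ to mass $\lambda$ via the trick from Lemma~\ref{lsc of gs energy lemma}. Your route is more self-contained, since it dispenses with the continuous-dependence ingredient, at the price of carrying the rescaling error $\|v_n-u_n(t_n)\|_{H^1}=o(1)$ through the argument. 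You are also more explicit than the paper about upgrading the strong $L^2$ convergence of the (translated) minimizing sequence to strong $H^1$ convergence: the paper simply asserts that minimizing sequences ``converge'', but Section~3 only delivers $L^2$-strong convergence to a minimizer, and the norm-convergence argument you spell out is precisely what is needed to conclude $d(u_n(t_n))\to 0$ in the $H^1$ metric.
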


\begin{proof}
    We start by proving the statement when for initial data $u_0$ such that $\|u_0\|^2_{L^2}=\lambda$. If \eqref{smallness condition} does not hold, there is nothing to prove as we have no global existence. 
    Assuming orbital stability does not hold, then there exists $\varepsilon>0$ and $(u_0^{(n)})_{n\in\mathbb{N}}\subset S_\lambda$ with $d(u_0^{(n)})\xrightarrow{n\rightarrow\infty}0$ and such that, calling $u_n(t)$ the solution to
    \begin{equation*}
    \begin{cases}
        i\partial_t u=-\Delta u+(w\ast|u|^2)u\\
        u(0,\cdot)=u_0^{(n)},
    \end{cases}
\end{equation*}
we have 
\begin{equation}\label{distance bdd}
    d(u_n(t_n))>\varepsilon
\end{equation}
for a suitable sequence of times $(t_n)_{n\in\mathbb{N}}$. Let us denote $v_n=u_n(t_n)$; since both mass and energy are conserved, we have $\|v_n\|^2_{L^2}=\lambda$ and $\mathcal{E}(v_n)=\mathcal{E}(u_n)$, so $(v_n)_{n\in\mathbb{N}}$ is also a minimizing sequence for \eqref{definition minimizer}, hence it converges (in Section 3 we proved that every minimizing sequence converges up to subsequences and translations), contradicting \eqref{distance bdd}.

Then, if the initial datum $u_0'$ is not in $\mathcal{S}_\lambda$, the thesis follows from the continuity w.r.t.~the initial datum of \eqref{evolution equation}: fix $\varepsilon>0$. In the first part, we proved that there exists $\delta_1>0$ such that for every $u_0\in\mathcal{S}_\lambda$ such that  $d(u_0)<\delta_1$ then $d(u(t))<\varepsilon/2$ for every $t>0$, where $u(t)$ is the solution to \eqref{evolution equation} with initial datum $u_0$. Moreover, from the continuity of the solution w.r.t.~the initial datum we get that there exists $\delta_2>0$ such that if $\|u_0-u_0'\|_{H^1}<\delta_2$ then $\|u(t)-u'(t)\|_{H^1}<\varepsilon/2$ for every $t>0$, where $u'(t)$ is the solution to \eqref{evolution equation} with initial datum $u'_0$. Let $\delta=\min\{\delta_1,\delta_2\}$.

Now, for every $u_0'\in H^1\backslash\mathcal{S}_\lambda$ with $d(u_0')<\delta$, there exists $u_0\in\mathcal{S}_\lambda$ such that $d(u_0)<\delta$ and $\|u_0-u_0'\|_{H^1}<\delta$. Finally, we have
\begin{equation*}
    d(u'(t))\le d(u(t))+\|u(t)-u'(t)\|_{H^1}\le \varepsilon/2+\varepsilon/2=\varepsilon
\end{equation*}
for every $t>0$.

Notice that for this second part we didn't comment on the existence of a global solution with initial datum $u'_0\not\in\mathcal{S}_\lambda$; this is because up to taking an even smaller $\delta$, $\|u'_0\|^2_{L^2}\le\lambda+\delta^2<\frac{2}{K\|w_2\|_{L^{3/2,\infty}}}$ so the global existence condition \eqref{smallness condition} still holds.
\end{proof}

\small DIPARTIMENTO DI MATEMATICA, POLITECNICO DI MILANO, 20133 MILANO\\
\textit{e-mail address:} \href{mailto:tommaso.pistillo@polimi.it}{tommaso.pistillo@polimi.it}

UNIVERSITÉ DE LORRAINE, CNRS, IECL, F-57000 METZ, FRANCE\\
\textit{e-mail address:} \href{mailto:tommaso.pistillo@univ-lorraine.fr}{tommaso.pistillo@univ-lorraine.fr}

\doclicenseThis
\end{document}